\newtheorem{theorem}{Theorem}
\def\neprod{\setbox0=\hbox{$\nearrow$}
  \box0\kern-1.6em\prod} %.05em  %-1.15em
\def\swprod{\setbox0=\hbox{$\swarrow$}%
  \,\,\raise.03em\box0\kern-1.18em\prod} %.05em  %-1.15em
\def\seprod{\setbox0=\hbox{$\searrow$}%
  \,\,\raise.00em\box0\kern-2.0em\prod} %.05em  %-1.15em
\def\im{{\mbox{Im}}}
\def\openone{\leavevmode\hbox{\small1\kern-3.3pt\normalsize1}}
\def\re{\mathrm{Re\,}}
\def\bbbe{{\Bbb E}}
\def\bbbc{{\Bbb C}}
\def\bbbr{{\Bbb R}}
\def\p{{\boldsymbol p}}
\def\q{{\boldsymbol q}}
\begin{document}

\begin{center}
{\LARGE \bf Real Hamiltonian forms of affine Toda field \\[8pt]
theories: spectral aspects}

\bigskip

{\bf Vladimir S. Gerdjikov$^{a,b,}$\footnote{E-mail: {\tt gerjikov@inrne.bas.bg}},
Georgi  G. Grahovski$^{c,}$\footnote{E-mail: {\tt grah@essex.ac.uk}} and
Alexander A. Stefanov$^{a,e,}$\footnote{E-mail: {\tt aleksander.a.stefanov@gmail.com}
}}

\end{center}

\medskip

\noindent
{\it $^{a}$ Institute of Mathematics and Informatics, Bulgarian Academy of Sciences,
 8 Acad. G. Bonchev str.,  1113 Sofia, Bulgaria }\\[5pt]
{\it $^{b}$Institute for Advanced Physical Studies, 111 Tsarigradsko chaussee,
Sofia 1784, Bulgaria}\\[5pt]
{\it $^{c}$ Department of Mathematical Sciences, University of Essex, Wivenhoe Park, Colchester, UK}\\[5pt]
{\sl $^d$ Faculty of Mathematics and Informatics, Sofia University "St. Kliment Ohridski"
 5 James Bourchier Blvd.,  1164 Sofia, Bulgaria } \\[5pt]

\begin{abstract}
\noindent
The paper is devoted to real Hamiltonian forms of 2-dimensional Toda field theories
related to  exceptional simple Lie algebras, and to the spectral theory of the associated Lax operators.
Real Hamiltonian forms are a special type of ``reductions'' of Hamiltonian systems, similar to  real forms of
 semi-simple Lie algebras. Examples of real Hamiltonian forms of affine Toda field theories related to exceptional
 complex untwisted affine Kac-Moody algebras are studied.
Along with the associated Lax representations, we also
formulate the relevant Riemann-Hilbert problems and derive the minimal sets of scattering
data that determine uniquely the scattering matrices and the potentials of the Lax operators.

\end{abstract}

%\tableofcontents

\section{Introduction}\label{sec:intro}

Affine Toda field theories \cite{OlPerMikh,BowCor2,BowCor3,Olive,Olove2} received considerable attention of the Mathematical physics community in the past three decades and
are one of the best understood integrable massive field theories at classical and quantum levels in $1+1$ dimensions \cite{BBT}. The interest in Toda field theories was inspired by  the work of A. Zamolodchikov \cite{Zamol} on deformation of conformal field theories  preserving
integrabilty. It was shown in \cite{Zamol} that the resultant theory is characterised by eight masses
related to the Cartan matrix of $E_8$, and by integrals of motion with spins given by the
exponents of $E_8$ modulo its Coxeter number.

Affine Toda field theories (ATFT) are integrable models of real scalar field ${\bf q}=(q_1, \dots q_n)$, in one space
dimension  with exponential interactions. The Lagrangian of the theory is given by
\begin{equation}\label{eq: ATFT-Lagr}
{\cal L}[{\bf q}] = {1\over 2}(\partial_\mu {\bf q} \cdot \partial^\mu {\bf q})- {m^2\over \beta^2}\sum_{k=0}^{n} n_k \left({\rm e}^{\beta (\alpha_k \cdot{\bf q})} -1\right),
\end{equation}
where the field ${\bf q}(x,t)$ is an $n$-dimensional vector. This corresponds to equations of motion of the form
\begin{equation}\label{eq:ATFT}
{\partial ^2 {\bf q}  \over \partial x \partial t } =
\sum_{j=0}^{r} n_j \alpha_j e^{-(\alpha_j ,{\bf q}(x,t))},
\end{equation}
Each ATFT is associated to a (finite) simple Lie algebra ${\frak g}$ \cite{AMV,BBT}. Here $n$ is the rank of ${\frak g}$, $\alpha_k$s ($k = 1,\dots,n$) are the simple roots of ${\frak g}$ and $\alpha_0$ is the minimal root.
Thus,  \eqref{eq: ATFT-Lagr} are described by the extended Dynkin diagrams of the associated affine
algebra $\hat{\frak g}$. The fields $q_k$ can be rescaled so that $\beta$ only appears in ${\cal L}$ through a common factor $1/\beta^2$ (expanding in powers of $\beta^2$ is equivalent in  Quantum theory to expanding in $\hbar$) \cite{BowCor2,BowCor3}.

Non-simply laced theories are obtained from the affine Toda theories based on simply
laced algebras by folding the corresponding Dynkin diagrams. The same process, called
classical ``reduction'', provides solutions of a non-simply laced theory from the classical
solutions with special symmetries of the parent simply laced theory.

One of the important features of affine Toda field theories is their integrability by the inverse scattering method (ISM) \cite{FaTa,ZMNP}. The starting point is the existence of the so-called Lax operator (see \eqref{eq:2.1} below). The interpretation of the ISM as a generalized Fourier transforms \cite{AKNS,IP2,GeYa} allows one to study all the
fundamental properties of the corresponding  nonlinear evolutionary equations (NLEE's): i)~the description of the class of NLEE related to a~given Lax operator
$L(\lambda)$ and solvable by the ISM; ii)~derivation of the infinite family of integrals of motion;  iii)~their hierarchy
of Hamiltonian structures \cite{GVYa}; and iv) description  of the gauge equivalent systems \cite{GGMV,Grah,Grah2}.

Real Hamiltonian forms (RHF) are another type of ``reductions'' of Hamiltonian systems.
The extraction of RHFs is similar to the obtaining a real forms of
a semi-simple Lie algebra. The Killing form for the later is indefinite in general (it is negatively-
definite for the compact real forms). So one should not be surprised of getting RHF's with
indefinite kinetic energy quadratic form. Of course this is an obstacle for their quantization.

The purpose of this paper is to outline the spectral theory of the Lax operators of real Hamiltonian forms of affine Toda field theories related to complex untwisted exceptional affine Lie algebras.

The structure of the paper is as follows: In Section 2 we provide a brief summary of the Lax representation of ATFTs and all necessary Lie algebraic background knowledge. As to the root systems of the exceptional algebras we are following the conventions in \cite{Bourb1}. In Section 3 first we describe the general method for constructing RHFs. Then we briefly describe the sets of admissible roots of the exceptional Lie algebras, and the explicit constructions of their RHFs.
Section 4 is devoted to the spectral theory of the Lax operators of ATFTs with ${\Bbb Z}_h$-reductions, where $h$ is the Coxeter number of ${\frak g}$. We first provide the general construction of the fundamental analytic solutions $\xi_\nu (x,t,\lambda)$ (FAS) of the Lax operator and show their relevance to the Riemann-Hilbert problem (RHP) on a set of $2h_g$ rays $l_\nu$ closing angles $\pi /h_g$. Then we introduce the asymptotics of the FAS  $\xi_\nu (x,t,\lambda)$ for $x\to \pm \infty$ and $\lambda \in l_\nu$ which determine the scattering data of $L$. This Section ends with a theorem specifying the minimal sets of scattering data for generic choice of the simple Lie algebra $\mathfrak{g}$. Section 5 contains the specific data, concerning each of the exceptional Lie algebras that allows one to determine the minimal set os scattering data for each case. The paper ends up with Conclusions.

\section{Affine Toda field theories: preliminaries}\label{sec:prelim}

To each simple Lie algebra $\mathfrak{g} $ one can relate a Toda field
theory  in $1+1 $ dimensions. It allows a Lax representation in a zero-curvature form:
\begin{equation}\label{eq:0.1}
[L,M]=0,
\end{equation}
where $L $ and $M $ are first order ordinary differential (Lax)
operators:
\begin{eqnarray}\label{eq:2.1}
L\psi \equiv \left(  i{d  \over dx } - iq_x(x,t) - \lambda
J_0\right)
\psi (x,t,\lambda )=0, \\
M\psi \equiv \left(  i{d  \over dt } -  {1\over \lambda}
I(x,t)\right) \psi (x,t,\lambda )=0.
\end{eqnarray}
whose potentials take values in $\mathfrak{g} $.
Here also $q(x,t) \in \mathfrak{h}$ - the Cartan subalgebra of
$\mathfrak{g}$, $\q(x,t)=(q_1,\dots , q_r) $ is its dual $r
$-component vector  ($r=\mbox{rank}\,\mathfrak{ g}$). The potentials of the Lax operators are chosen as follows
\begin{equation}\label{eq:2.2}
J_0 = \sum_{\alpha \in \pi}^{} E_{\alpha },\qquad I(x,t) =
\sum_{\alpha \in \pi}^{} e^{-(\alpha ,\q(x,t))} E_{-\alpha }.
\end{equation}
Here $\pi_{\mathfrak{g}} $ stands for the set of admissible roots of
$\mathfrak{g} $, i.e.  $\pi_{\mathfrak{g}} = \{\alpha _0, \alpha
_1,\dots, \alpha _r\} $, with $\alpha _1,\dots, \alpha _r $ being
the simple roots of $\mathfrak{ g}$ and $\alpha _0 $ being the
minimal root of $\mathfrak{ g} $.  The corresponding Toda field theory is known
as  affine Toda field theory (ATFT). The Dynkin graph corresponding to the set of
admissible roots $\pi_{\mathfrak{g}}=\{\alpha _0, \alpha _1,\dots,\alpha _r\} $ of ${\frak g}$ is called extended Dynkin diagram (EDD). The equations of motion are of the form:
\begin{equation}\label{eq:2.3}
{\partial ^2 \q  \over \partial x \partial t } =
\sum_{j=0}^{r} n_j \alpha_j e^{-(\alpha_j ,\q(x,t))},
\end{equation}
where $n_j $ are the minimal positive  integer coefficients $n_k $ that
provide the decomposition of the $\alpha _0 $ over the simple roots of
$\mathfrak{g} $:
\begin{equation}\label{eq:n_k}
-\alpha _0 = \sum_{k=1}^{r} n_k\alpha _k.
\end{equation}
It is  well known that ATFT models are an infinite-dimensional Hamiltonian system.  The (canonical)
Hamiltonian structure is given by:
\begin{eqnarray}\label{eq:H-g}
H_{\mathfrak{g}} &=& \int_{-\infty }^{\infty }dx\, \mathcal{H}_{\mathfrak{g}}(x,t), \qquad
\mathcal{H}_{\mathfrak{g}}(x,t)= { 1\over 2} ({\bf p}(x,t),
{\bf p}(x,t))+\sum_{k=0}^{r} n_k(e^{-({\bf q}(x,t),\alpha _k)}-1)  ,\\
\label{eq:ome-g}
\Omega _{\mathfrak{g}} &=& \int_{-\infty }^{\infty } dx\,
\omega_{\mathfrak{g}} (x,t), \qquad
\omega_{\mathfrak{g}} (x,t)=( \delta {\bf p}(x,t)\wedge \delta {\bf q}(x,t)),
\end{eqnarray}
where $H_{\mathfrak{g}} $ is the canonical Hamilton function and $\Omega _{\mathfrak{g}}$ is the canonical symplectic structure.
Here also ${\bf p} = d{\bf q}/dt $ are the canonical momenta
and coordinates satisfying canonical Poisson brackets:
\begin{equation}\label{eq:ATFT-PB}
\{ q_k(x,t) , p_j(y,t)\} = \delta_{jk} \delta (x-y).
\end{equation}
The infinite-dimensional phase space $\mathcal{M}=\{{\bf q}(x,t),{\bf p}(x,t)\}$ is spanned by the canonical coordinates and momenta.

\section{Real Hamiltonian forms and affine Toda field theories}\label{sec:RHF}

The Lax representations of the ATFT models widely discussed in the
literature (see e.g. \cite{Mikh,OlPerMikh,Olive,SasKha} and the
references therein) are related mostly to the normal real form of the Lie
algebra $\mathfrak{ g} $, see \cite{Helg}. Here we will study real Hamiltonian forms of ATFT models. The notion of real Hamiltonian forms was introduced in \cite{2} and used to study reductions of ATFTs in \cite{GG}. After a brief outline of the basic theory (following \cite{2}), in this Section we will describe the real Hamiltonian forms for ATFT models related to exceptional untwisted complex Kac-Moody algebras \cite{Kac,VinOni,Xu}.

\subsection{Real Hamiltonian forms}\label{ssec:rHF}

The starting point in the construction of real Hamiltonian forms (RHF) is the complexification of the field involved in the Hamiltonian of the model. First for dynamical variables we consider complex-valued fields
\[
{\bf q}^\bbbc = {\bf q}^0 + i{\bf q}^1, \qquad {\bf p}^\bbbc =
{\bf p}^0 + i{\bf p}^1.
\]
Next we introduce an involution $\mathcal{ C} $ acting on the
phase space $\mathcal{ M} \equiv \{q_k(x), p_k(x)\}_{k=1}^{n} $ as
follows:
\begin{eqnarray}\label{eq:Cc}
&& \mbox{1)} \qquad \mathcal{ C}(F(p_k,q_k)) = F(\mathcal{ C}(p_k),
\mathcal{ C}(q_k)),  \nonumber\\
&& \mbox{2)} \qquad \mathcal{ C}\left( \{ F(p_k,q_k), G(p_k,q_k)\}\right) =
\left\{ \mathcal{ C}(F), \mathcal{ C}(G) \right\} , \\
&& \mbox{3)} \qquad \mathcal{ C}(H( p_k,q_k)) = H(p_k,q_k) . \nonumber
\end{eqnarray}
Here $F(p_k,q_k) $, $G(p_k,q_k) $ and the Hamiltonian $H(p_k,q_k) $ are
functionals on $\mathcal{M} $ depending analytically on the fields
$q_k(x,t) $ and $p_k(x,t) $.

The complexification of the ATFT is rather straightforward. The resulting
complex ATFT (CATFT) can be written down as standard Hamiltonian system with
twice as many fields ${\bf q}^a(x,t) $, ${\bf p}^a(x,t)  $, $a=0,1 $:
\begin{equation}\label{eq:qp-c}
{\bf p}^\bbbc (x,t) = {\bf p}^0(x,t)+i {\bf p}^1(x,t), \qquad
{\bf q}^\bbbc (x,t)= {\bf q}^0(x,t)+i {\bf q}^1(x,t),
\end{equation}
\begin{equation}\label{eq:qp-pb}
\{{q}_{k}^0(x,t), {p}_{j}^0(y,t) \}= - \{{q}_{k}^1(x,t),
{p}_{j}^1(y,t) \} = \delta _{kj} \delta (x-y).
\end{equation}
The densities of the corresponding Hamiltonian and symplectic form
equal
\begin{eqnarray}\label{eq:H_0}
\mathcal{H}_{\rm ATFT}^\bbbc &\equiv & \re \mathcal{H}_{\rm ATFT}
({\bf p}^0+i {\bf p}^1, {\bf q}^0+i {\bf q}^1) \nonumber\\
&=&  {1\over 2 } ({\bf p}^0,{\bf p}^0) -{1\over 2 }
({\bf p}^1,{\bf p}^1) + \sum_{k=0}^{r} e^{-({\bf q}^0,\alpha _k)}
\cos (({\bf q}^1,\alpha _k)) ,  \\
\label{eq:ome_0}
\omega^\bbbc &=& (d{\bf p}^0\wedge  i d{\bf q}^0) - (d{\bf p}^1\wedge
d {\bf q}^1).
\end{eqnarray}
The family of RHF then are obtained from the CATFT by imposing an
invariance condition with respect to the involution
$\tilde{\mathcal{ C}} \equiv \mathcal{ C}\circ \ast $ where by
$\ast $ we denote the complex conjugation. The involution
$\tilde{\mathcal{ C}} $ splits the phase space $\mathcal{ M}^\bbbc
$ into a direct sum $\mathcal{ M}^\bbbc \equiv {\cal M}_+^\bbbc
\oplus \mathcal{M}_-^\bbbc$ where
\begin{equation}\label{eq:M-c}\begin{aligned}
\mathcal{M}_+^\bbbc &= \mathcal{ M}_0 \oplus i \mathcal{ M}_1, &\quad \mathcal{M}_-^\bbbc &= \mathcal{ M}_0 \oplus -i \mathcal{ M}_1, \\
 \mathcal{C}(\q^+ + i \q^-) &= (\q^+ - i \q^-), &\quad  \mathcal{C}(\p^+ + i \p^-) &= (\p^+ - i \p^-).
\end{aligned}\end{equation}
Each involution $\mathcal{C} $ induces an involution
(involutive automorphism) $\mathcal{C}^\# $ also of $\mathfrak{g} $.
Below we will choose $\mathcal{C}^\# $ in a special way, which is related to the Coxeter
automorphism defined below in its dihedral form (\ref{eq:DCox-E6}), see \cite{Bourb1, Helg, Cart}.
Indeed, we will define $\mathcal{C}^\# $ using the reflection $S_1$ with respect to the set of
black roots of the Dinkin dyagram (see Figure \ref{fig:Dynkin-E6}). More precisely:
\begin{equation}\label{eq:alj}\begin{aligned}
\mathcal{C}^\# \alpha_j &=  \alpha_j \quad \mbox{for}\quad  \alpha_j \in W_g, &\qquad
\mathcal{C}^\# \beta_j &=  -\beta_j \quad \mbox{for} \quad \beta_j \in B_g , \\
{\bf p}^0(x,t) &= \sum_{\alpha_j\in W_g} {\bf p}_j(x,t) \alpha_j,    &\qquad   {\bf p}^1(x,t) &= \sum_{\beta_j \in B_g} {\bf p}_j(x,t) \beta_j, \\
{\bf q}^0(x,t) &= \sum_{\alpha_j\in W_g} {\bf q}_j(x,t) \alpha_j,    &\qquad   {\bf q}^1(x,t) &= \sum_{\beta_j \in B_g} {\bf q}_j(x,t) \beta_j,
\end{aligned}\end{equation}
where $W_g$ (resp. $B_g$) is the set of white roots (resp. black roots) of the Dinkin diagram.

Thus to each involution $\mathcal{ C} $ one can relate a RHF of the ATFT.  Note that $\mathcal{C}^\# $
preserves the system of admissible roots of $\mathfrak{g} $ and the
extended Dynkin diagrams of $\mathfrak{ g} $ studied in \cite{SasKha}.
 For a sake of brevity below we will skip the dependence of ${\bf p} $ and ${\bf q} $ on
$x $ and $t $. Indeed, the condition 3) in \eqref{eq:Cc} requires that:
\begin{equation}\label{eq:*1}
(\mathcal{C}(\q ),\alpha ) = ({\bf q} , \mathcal{C}^\# (\alpha )), \qquad \alpha \in \pi_{\mathfrak{g}},
\end{equation}
and therefore we must have $\mathcal{C}(\pi_{\mathfrak{g}}) = \pi_{\mathfrak{g}} $.
As a result:
\begin{equation}\label{eq:palj}\begin{aligned}
\mathcal{C}( {\bf p}^0(x,t), \alpha_j) &= ({\bf p}^0(x,t),\alpha_j), &\qquad \mathcal{C}( {\bf p}^1(x,t), \beta_j) &= -({\bf p}^1(x,t),\beta_j), \\
( {\bf p}^0(x,t), \beta_j) &= 0, &\qquad ( {\bf p}^1(x,t), \alpha_j) &= 0, \\
\mathcal{C}( {\bf q}^0(x,t), \alpha_j) &= ({\bf q}^0(x,t),\alpha_j), &\qquad \mathcal{C}( {\bf q}^1(x,t), \beta_j) &= -({\bf q}^1(x,t),\beta_j), \\
( {\bf q}^0(x,t), \beta_j) &= 0, &\qquad ( {\bf q}^1(x,t), \alpha_j) &= 0,
\end{aligned}\end{equation}
where $\alpha_j \in W_g$, and $\beta_j \in B_g$.
Then applying the ideas of \cite{Mikh} we obtain the following result
for the RHF of the ATFT related to $\mathfrak{g} $:
\begin{equation}\label{eq:*H-g}\begin{aligned}
\mathcal{H}^{\bbbr}_{\mathfrak{g}} &= {1 \over 2} ({\bf p}^+ , {\bf p}^+ ) -{1 \over 2} ({\bf p}^- , {\bf p}^- ) +
\sum_{\alpha_k\in W_g} n_k' (e^{-({\bf q}^+ ,\alpha _k)} -1) + \sum_{\beta_k\in B_g} n_k'' ( \cos ({\bf q}^- ,\beta_k)-1),
\end{aligned}\end{equation}
\begin{equation}\label{eq:*ome-g}\begin{split}
\omega ^{\bbbr}_{\mathfrak{g}} = (\delta {\bf p}^+  \wedge \delta {\bf q}^+) -(\delta {\bf p}^-  \wedge \delta {\bf q}^-) ,
\end{split}\end{equation}
The Hamiltonian along with the terms related to the simple roots, contains also the minimal root $-\alpha_0$.
It is well known that the maximal root $\alpha_0$ can be expanded as sum of the simple roots with integer nonnegative coefficients:
\begin{equation}\label{eq:a0}\begin{split}
 \alpha_0 = \sum_{j=1}^{r} n_j \alpha_j = \sum_{\alpha_k\in W_g} n_k' \alpha_k + \sum_{\beta_j\in B_g} n_k'' \beta_j .
\end{split}\end{equation}
In the second expression above we have separated the terms with the white and black roots.

The RHF of ATFT are more general integrable systems than the models
described in \cite{Evans, Evans2, Evans3, SasKha} which involve only the fields ${\bf q}^+ $, ${\bf p}^+
$ invariant with respect to $\mathcal{C} $.

\subsection{Affine Toda field theories related to $E_6^{(1)}$}\label{ssec:E6-1}

The set of admissible roots for this algebra is
\begin{equation}\label{eq:e6roots}\begin{aligned}
\alpha_1&={1\over 2}(e_1-e_2-e_3-e_4-e_5-e_6-e_7+e_8), &\quad \alpha_2 &=e_1+e_2, \\
\alpha_3 &=e_2-e_1,\qquad \alpha_4=e_3-e_2, \qquad \alpha_5=e_4-e_3, &\quad \alpha_6 &=e_5-e_4,\\
\alpha_0 &=-{1\over 2}(e_1+e_2+e_3+e_4+e_5-e_6-e_7+e_8).
\end{aligned}\end{equation}
Here $\alpha_1,\dots , \alpha_6$ form the set of simple roots of
$ E_6$ and $\alpha_0$ is the minimal root of the algebra. The extended Dynkin diagram of $E_6^{(1)}$ is shown on Figure \ref{fig:Dynkin-E6}.
This is the standard definition of the root system of $E_6
$ embedded into the 8-dimensional Euclidean space $\bbbe^8 $. The
root space $\bbbe_6 $ of the algebra $E_6$ is the
6-dimensional subspace of $\bbbe^8 $ orthogonal to the vectors
$e_7+e_8 $ and $e_6+e_7+2e_8 $. Thus any vector ${\bf q} $
belonging to $\bbbe_6 $ has only 6 independent coordinates and can
be written as:
\begin{gather}\label{eq:vec-q}
{\bf q} = \sum_{k=1}^{5} q_k e_k + q_6 e'_6, \qquad e'_6 = {1\over
\sqrt{3} } (e_6 + e_7 - e_8).
\end{gather}
The fundamental weights of $E_6$ are
\begin{eqnarray*}
\omega_1&=&{2\over 3} (e_8-e_7-e_6), \qquad \omega_2= {1\over 2}(e_1+e_2+e_3+e_4+e_5-e_6-e_7+e_8);\\
\omega_3 &=& {1\over 2}(-e_1+e_2+e_3+e_4+e_5)+{5\over 6}(e_8-e_7-e_6), \qquad
\omega_4 = e_3+e_4+e_5-e_6-e_7+e_8,\\
\omega_5 &=& e_4+e_5 + {2\over 3} (e_8-e_7-e_6), \qquad \omega_6=e_5+ {1\over 3}(e_8-e_7-e_6).
\end{eqnarray*}

\begin{figure}
\begin{center}
\begin{tikzpicture}
\draw[thick] (0,0) --(6,0);
\draw[thick] (3,0) --(3,3);
\draw[thick,fill=white] (0,0) circle (1.5 mm);
\draw[thick,fill=black] (1.5,0) circle (1.5 mm);
\draw[thick,fill=white] (3,0) circle (1.5 mm);
\draw[thick,fill=black] (4.5,0) circle (1.5 mm);
\draw[thick,fill=white] (6,0) circle (1.5 mm);
\draw[thick,fill=black] (3,1.5) circle (1.5 mm);
\draw[thick,fill=white] (3,3) circle (1.5 mm);
%Labels
\draw (0,-0.5) node {\small $\alpha_1$};
\draw (1.5,-0.5) node {\small $\alpha_3$};
\draw (3,-0.5) node {\small $\alpha_4$};
\draw (4.5,-0.5) node {\small $\alpha_5$};
\draw (6,-0.5) node {\small $\alpha_6$};
\draw (3.5,1.5) node {\small $\alpha_2$};
\draw (3.5,3) node {\small $\alpha_0$};
\draw (0.0,3) node {$E_6^{(1)}$};

\end{tikzpicture}
\end{center}
  \caption{\small The extended Dynkin diagram of the  complex untwisted affine Kac-Moody algebra $E_6^{(1)}$. The white roots are invariant with respect to the automorphism $S_2$ in \eqref{eq:DCox-E6}.}\label{fig:Dynkin-E6}
\end{figure}

\noindent
We will use the dihedral realization of the Coxeter automorphism for $E_6^{(1)}$:
\begin{equation}\label{eq:DCox-E6}
{\rm Cox}\, (E_6^{(1)})=S_1\circ S_2, \qquad S_1=  S_{\alpha_2}\circ S_{\alpha_3}\circ S_{\alpha_5}, \qquad S_2= S_{\alpha_1}\circ S_{\alpha_4}\circ S_{\alpha_6}.
\end{equation}
If we require an invariance of the Hamiltonian and the symplectic form \eqref{eq:*H-g} with respect to $S_1$,
and restrict on the set of admissible roots $\beta_k$, then we get a real Hamiltonian form  of the $E_6^{(1)}$ ATFT, described by:
\begin{equation}\label{eq:E6-rhf}\begin{split}
\mathcal{H}^{\bbbr}_{E_6} &= {1 \over 2} ({\bf p}^+ , {\bf p}^+ ) -{1 \over 2} ({\bf p}^- , {\bf p}^- ) +
\sum_{k =0,1,4,6} n_k' (e^{-({\bf q}^+ ,\alpha _k)} -1) + \sum_{k= 2,3,5} n_k'' ( \cos ({\bf q}^- ,\beta_k)-1),
\end{split}\end{equation}
where $n_1'=n_6'=1$, $n_4'=3$ and $n_2'' =n_3'' =n_5'' =2$ \cite{Bourb1}, and
\begin{equation}\label{eq:*ome-g6}\begin{split}
\omega ^{\bbbr}_{E_6^{(1)}} =
(\delta {\bf p}^+  \wedge \delta {\bf q}^+) -(\delta {\bf p}^-  \wedge \delta {\bf q}^-) .
\end{split}\end{equation}

\subsection{Affine Toda field theories related to $E_7^{(1)}$}\label{ssec:E7-1}

The extended root system of  this algebra is given by
\begin{equation}\label{eq:e7roots}\begin{aligned}
\alpha_1 &={1\over 2}(e_1-e_2-e_3-e_4-e_5-e_6-e_7+e_8), &\quad \alpha_2 &=e_1+e_2, \\
\alpha_3 &=e_2-e_1,\qquad \alpha_4=e_3-e_2, &\quad \alpha_5 &=e_4-e_3, \\
\alpha_6 &=e_5-e_4,\qquad \alpha_7=e_6-e_5, &\quad \alpha_0 &=e_7-e_8,
\end{aligned}\end{equation}
where $\alpha_1,\dots , \alpha_7$ form the set of simple roots of
$E_7$ and $\alpha_0$ is the minimal root of the algebra. The extended Dynkin diagram for $E_7^{(1)}$ is shown on Figure \ref{fig:Dynkin-E7}.
This is the standard def\/inition of the root system of $E_7
$ embedded into the 8-dimensional Euclidean space $\bbbe^8 $. The
root space $\bbbe_7 $ of the algebra $E_7$ is the
7-dimensional subspace of $\bbbe^8 $ orthogonal to the vector
$e_7+e_8 $. Thus any vector $\vec{q} $ belonging to $\bbbe_7 $ has
7 independent coordinates and can be written as:
\begin{gather}\label{eq:vec-q'}
\vec{q} = \sum_{k=1}^{6} q_k e_k + q_7 e'_7, \qquad e'_7 = {1\over
\sqrt{2} } (e_7 - e_8).
\end{gather}

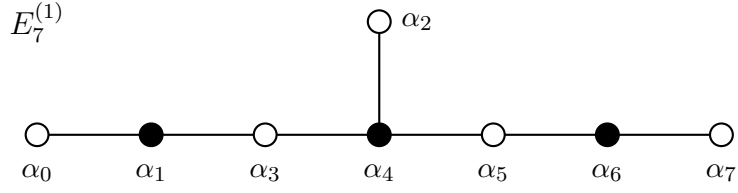
\begin{figure}
\begin{center}
\begin{tikzpicture}
\draw[thick] (-1.5,0) --(7.5,0);
\draw[thick] (3,0) --(3,1.5);
\draw[thick,fill=white] (-1.5,0) circle (1.5 mm);
\draw[thick,fill=black] (0,0) circle (1.5 mm);
\draw[thick,fill=white] (1.5,0) circle (1.5 mm);
\draw[thick,fill=black] (3,0) circle (1.5 mm);
\draw[thick,fill=white] (4.5,0) circle (1.5 mm);
\draw[thick,fill=black] (6,0) circle (1.5 mm);
\draw[thick,fill=white] (7.5,0) circle (1.5 mm);
\draw[thick,fill=white] (3,1.5) circle (1.5 mm);

%Labels
\draw (-1.5,-0.5) node {\small $\alpha_0$};
\draw (0,-0.5) node {\small $\alpha_1$};
\draw (1.5,-0.5) node {\small $\alpha_3$};
\draw (3,-0.5) node {\small $\alpha_4$};
\draw (4.5,-0.5) node {\small $\alpha_5$};
\draw (6,-0.5) node {\small $\alpha_6$};
\draw (7.5,-0.5) node {\small $\alpha_7$};
\draw (3.5,1.5) node {\small $\alpha_2$};

\draw (-1.5,1.5) node {$E_7^{(1)}$};

\end{tikzpicture}
\end{center}
  \caption{\small The extended Dynkin diagram of the  complex untwisted affine Kac-Moody algebra $E_7^{(1)}$. The white roots are invariant with respect to the automorphism $S_1$ in \eqref{eq:DCox-E7}.}\label{fig:Dynkin-E7}
\end{figure}

\noindent
Using the dihedral realization of the Coxeter automorphism for $E_7^{(1)}$:
\begin{equation}\label{eq:DCox-E7}
{\rm Cox}\, (E_7^{(1)})=S_1\circ S_2, \qquad S_1= S_{\alpha_1}\circ S_{\alpha_4}\circ S_{\alpha_6}, \qquad S_2=  S_{\alpha_2}\circ S_{\alpha_3}\circ S_{\alpha_5}\circ S_{\alpha_7},
\end{equation}
we  impose an invariance of the Hamiltonian \eqref{eq:*H-g} with respect to the automorphism $S_1$ in \eqref{eq:DCox-E7}.
Then the root space ${\Bbb E}_7$ will split into direct sums: ${\Bbb E}_7={\Bbb E}_{7,+}\oplus {\Bbb E}_{7,-}$, with
\[
\alpha_2, \alpha_3, \alpha_5, \alpha_7 \in {\Bbb E}^{7,+}, \qquad \alpha_1, \alpha_4, \alpha_6 \in {\Bbb E}^{7,-}.
\]
If we again require an invariance of the Hamiltonian and the symplectic form \eqref{eq:*H-g} with respect to $S_1$ from \eqref{eq:DCox-E7}, and restrict on the set of admissible roots $\beta_k$, then we get a real Hamiltonian form  of the $E_7^{(1)}$ ATFT, described by:
\begin{equation}\label{eq:eq:E7-rhf}\begin{split}
\mathcal{H}^{\bbbr}_{E_7} &= {1 \over 2} ({\bf p}^+ , {\bf p}^+ ) -{1 \over 2} ({\bf p}^- , {\bf p}^- ) +
\sum_{k =0,2,3,5,7} n_k' (e^{-({\bf q}^+ ,\alpha _k)} -1) + \sum_{k= 1,4,6} n_k'' ( \cos ({\bf q}^- ,\beta_k)-1),
\end{split}\end{equation}
where $n_1'=n_6'=2$, $n_4'=4$  and $n_2'' =2$, $n_3''=n_5''=3$, $n_7'' =1$  \cite{Bourb1}, and
\begin{equation}\label{eq:*ome-g7}\begin{split}
\omega ^{\bbbr}_{E_7^{(1)}} =
(\delta {\bf p}^+  \wedge \delta {\bf q}^+) -(\delta {\bf p}^-  \wedge \delta {\bf q}^-) .
\end{split}\end{equation}

\subsection{Affine Toda field theories related to $E_8^{(1)}$}\label{ssec:E8-1}

The set of admissible roots for this algebra is
\begin{equation}\label{eq:e8roots}\begin{aligned}
\alpha_0 &=\frac{1}{2}(e_1 + e_2 + e_3 + e_4 + e_5 - e_6 - e_7 + e_8) \quad
\alpha_1 =\frac{1}{2}(e_1 - e_2 - e_3 - e_4 - e_5 - e_6 - e_7 + e_8) , \\
\alpha_2 &= e_1 + e_2, \quad
\alpha_3 = e_2 - e_1, \quad
\alpha_4 = e_3 - e_2, \\
\alpha_5 &= e_4 - e_3, \quad
\alpha_6 = e_5 - e_4, \quad
\alpha_7 = e_6 - e_5 \quad
\alpha_8 = e_7 - e_6.
\end{aligned}\end{equation}
where $\alpha_1,\dots , \alpha_8$ form the set of simple roots of
$E_8$ and $\alpha_0$ is the minimal root of the algebra.
If we exclude  $\alpha_8$  from \eqref{eq:e8roots}, we will get the set of positive roots \eqref{eq:e7roots} form $E_7$. The extended Dynkin diagram for $E_8^{(1)}$ is shown on Figure \ref{fig:Dynkin-E8}.

The fundamental weights of $E_8$ are
\begin{eqnarray*}
\omega_1&=& 2e_8, \qquad \omega_2 = {1\over 2}(e_1+e_2+e_3+e_4+e_5+e_6+e_7+5e_8),\\
\omega_3 &=& {1\over 2}(-e_1+e_2+e_3+e_4+e_5+e_6+e_7+7e_8), \qquad \omega_4= e_3+e_4+e_5+e_6+e_7+5e_8,\\
\omega_5 &=& e_4+e_5+e_6+e_7+4e_8, \qquad \omega_6 = e_5+e_6+e_7+3e_8,\\
\omega_7 &=& e_6+e_7+2e_8, \qquad \omega_8=e_7+e_8=-\alpha_0.
\end{eqnarray*}

\begin{figure}
\begin{center}
\begin{tikzpicture}
\draw[thick] (0,0) --(10.5,0);
\draw[thick] (3,0) --(3,1.5);

\draw[thick,fill=black] (0,0) circle (1.5 mm);
\draw[thick,fill=white] (1.5,0) circle (1.5 mm);
\draw[thick,fill=black] (3,0) circle (1.5 mm);
\draw[thick,fill=white] (4.5,0) circle (1.5 mm);
\draw[thick,fill=black] (6,0) circle (1.5 mm);
\draw[thick,fill=white] (7.5,0) circle (1.5 mm);
\draw[thick,fill=black] (9,0) circle (1.5 mm);
\draw[thick,fill=white] (10.5,0) circle (1.5 mm);
\draw[thick,fill=white] (3,1.5) circle (1.5 mm);

%Labels
\draw (0,-0.5) node {\small $\alpha_1$};
\draw (1.5,-0.5) node {\small $\alpha_3$};
\draw (3,-0.5) node {\small $\alpha_4$};
\draw (4.5,-0.5) node {\small $\alpha_5$};
\draw (6,-0.5) node {\small $\alpha_6$};
\draw (7.5,-0.5) node {\small $\alpha_7$};
\draw (9,-0.5) node {\small $\alpha_8$};
\draw (10.5,-0.5) node {\small $\alpha_0$};
\draw (3.5,1.5) node {\small $\alpha_2$};

\draw (0,1.5) node {$E_8^{(1)}$};

\end{tikzpicture}
\end{center}
  \caption{\small The extended Dynkin diagram of the  complex untwisted affine Kac-Moody algebra $E_8^{(1)}$. The white roots are invariant with respect to the automorphism $S_2$ in \eqref{eq:DCox-E8}.}\label{fig:Dynkin-E8}
\end{figure}
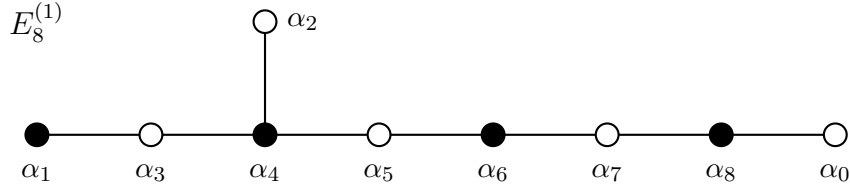

\noindent
We will use the dihedral realization of the Coxeter automorphism for $E_8^{(1)}$:
\begin{equation}\label{eq:DCox-E8}
{\rm Cox}\, (E_8^{(1)})=S_1\circ S_2, \qquad S_1= S_{\alpha_1}\circ S_{\alpha_4}\circ S_{\alpha_6}\circ S_{\alpha_8}, \qquad S_2= S_{\alpha_2}\circ S_{\alpha_3}\circ S_{\alpha_5}\circ S_{\alpha_7}.
\end{equation}
We  impose an invariance of the Hamiltonian \eqref{eq:*H-g} with respect to the automorphism $S_1$ in \eqref{eq:DCox-E7}.
Then the root space ${\Bbb E}_8$ will split into direct sums: ${\Bbb E}_8={\Bbb E}_{8,+}\oplus {\Bbb E}_{8,-}$, with
\[
\alpha_2, \alpha_3, \alpha_5, \alpha_7 \in {\Bbb E}_{8,+}, \qquad \alpha_1, \alpha_4, \alpha_6 ,\alpha_8\in {\Bbb E}_{8,-}.
\]
The invariance of the Hamiltonian and the symplectic form \eqref{eq:*H-g} with respect to $S_1$ from \eqref{eq:DCox-E8}, and restricting on the set of admissible roots $\beta_k$, will result in a real Hamiltonian form  of the $E_8^{(1)}$ ATFT, described by:
\begin{equation}\label{eq:eq:E8-rhf}\begin{split}
\mathcal{H}^{\bbbr}_{E_8} &= {1 \over 2} ({\bf p}^+ , {\bf p}^+ ) -{1 \over 2} ({\bf p}^- , {\bf p}^- ) +
\sum_{k =0,2,3,5,7} n_k' (e^{-({\bf q}^+ ,\alpha _k)} -1) + \sum_{k= 1,4,6,8} n_k'' ( \cos ({\bf q}^- ,\beta_k)-1),
\end{split}\end{equation}
where $n_1''=n_8''=2$, $n_4''=6, n_6''=4$ and $n_2'=n_7'=3$, $n_3' =4$, $n_5' =5$ \cite{Bourb1}, and
\begin{equation}\label{eq:*ome-g8}\begin{split}
\omega ^{\bbbr}_{E_8^{(1)}} =
(\delta {\bf p}^+  \wedge \delta {\bf q}^+) -(\delta {\bf p}^-  \wedge \delta {\bf q}^-) .
\end{split}\end{equation}

\subsection{Affine Toda field theories related to $F_4^{(1)}$}\label{ssec:F4-1}

The extended roots of this algebra are
\begin{equation}\label{eq:f4roots}\begin{aligned}
\alpha_1&= e_2-e_3, \qquad \alpha_2 = e_3-e_4,  &\quad \alpha_3&= e_4, \\
 \alpha_4&= {1\over 2}(e_1-e_2-e_3-e_4), &\quad \alpha_0&=-e_1-e_2,
\end{aligned}\end{equation}
with $\alpha_0$ being the minimal root. The extended Dynkin diagram of $F_4^{(1)}$ is shown on Figure \ref{fig:Dynkin-F4}.
The fundamental weights of $F_4$ are
\begin{equation}\label{eq:weif4}\begin{aligned}
  \omega_1 &= e_1+e_2, &\quad \omega_2 &= 2e_1+e_2+e_3, \\
  \omega_3 &= {1\over 2}(3e_1+e_2+e_3+e_4), &\quad \omega_4 &=e_1.
\end{aligned}\end{equation}

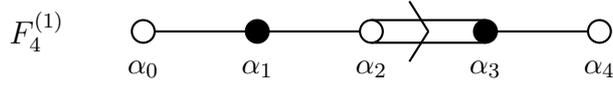
\begin{figure}
\begin{center}
\begin{tikzpicture}
\draw[thick] (0,0) --(3,0);
\draw[thick] (4.5,0) --(6,0);
\draw[thick] (3,0.15) --(4.5,0.15);
\draw[thick] (3,-0.15) --(4.5,-0.15);

%arrow
\draw[thick] (3.5,-0.4) --(3.75,0);
\draw[thick] (3.5,0.4) --(3.75,0);

\draw[thick,fill=white] (0,0) circle (1.5 mm);
\draw[thick,fill=black] (1.5,0) circle (1.5 mm);
\draw[thick,fill=white] (3,0) circle (1.5 mm);
\draw[thick,fill=black] (4.5,0) circle (1.5 mm);
\draw[thick,fill=white] (6,0) circle (1.5 mm);

%Labels
\draw (0,-0.5) node {\small $\alpha_0$};
\draw (1.5,-0.5) node {\small $\alpha_1$};
\draw (3,-0.5) node {\small $\alpha_2$};
\draw (4.5,-0.5) node {\small $\alpha_3$};
\draw (6,-0.5) node {\small $\alpha_4$};

\draw (-1.4,0) node {$F_4^{(1)}$};

\end{tikzpicture}
\end{center}
  \caption{\small The extended Dynkin diagram of the  complex untwisted affine Kac-Moody algebra $F_4^{(1)}$. The white roots are invariant with respect to the automorphism $S_2$ in \eqref{eq:DCox-F4}.}\label{fig:Dynkin-F4}
\end{figure}

\noindent
 Using the dihedral realisation of the Coxeter automorphism for $F_4^{(1)}$:
\begin{equation}\label{eq:DCox-F4}
{\rm Cox}\, (F_4^{(1)})=S_1\circ S_2, \qquad S_1=  S_{\alpha_1}\circ S_{\alpha_3}, \qquad S_2= S_{\alpha_2}\circ S_{\alpha_4},
\end{equation}
and imposing  an invariance of the Hamiltonian \eqref{eq:*H-g} with respect to the automorphism $S_1$ in \eqref{eq:DCox-F4}, will extract a real Hamiltonian form of $F_4^{(1)}$ ATFT of the form
\begin{equation}\label{eq:F4-RHF}\begin{split}
\mathcal{H}^{\bbbr}_{F_4^{(1)}} =
{1 \over 2} \sum_{k=1}^{4} \left((p_k^+)^2- (p_k^-)^2\right) +
\sum_{k =0,2,4} n_k' (e^{-({\bf q}^+ ,\alpha _k)} -1) + \sum_{k= 1,3} n_k'' ( \cos ({\bf q}^- ,\beta_k)-1),
\end{split}\end{equation}
where $n_2'=3$,  $n_4'=2$ and  $n_1''=2$,  $n_3''=4$. In addition
\begin{equation}\label{eq:*ome-g4}\begin{split}
\omega ^{\bbbr}_{F_4^{(1)}} =
(\delta {\bf p}^+  \wedge \delta {\bf q}^+) -(\delta {\bf p}^-  \wedge \delta {\bf q}^-) .
\end{split}\end{equation}
Then the root space ${\Bbb E}_4$ will split into direct sums: ${\Bbb E}_4={\Bbb E}_{4,+}\oplus {\Bbb E}_{4,-}$, with
$\alpha_2, \alpha_4  \in {\Bbb E}_{4,+}$ and $\alpha_1, \alpha_3 \in {\Bbb E}_{4,-}$.

\subsection{Affine Toda field theories related to $G_2^{(1)}$}\label{ssec:G2-1}

The set of admissible roots for this algebra is given by
\begin{eqnarray}\label{eq:g2roots}
\alpha_1&=& e_1-e_2, \qquad \alpha_2= -e_1+e_2+e_3, \qquad \alpha_0=-e_1-e_2+2e_3.
\end{eqnarray}
The fundamental weights of $G_2$ are
\begin{eqnarray*}
% \nonumber % Remove numbering (before each equation)
  \omega_1 &=& e_1-e_2+2e_3, \qquad \omega_2= -e_1-e_2+2e_3.
\end{eqnarray*}
The extended Dynkin diagram of $G_2^{(1)}$ is shown on Figure \ref{fig:Dynkin-G2}.

Let us take $C_1=S_{\alpha_2}$.
The invariance of the Hamiltonian \eqref{eq:*H-g} with respect to the automorphism $C_1$ in will give a real Hamiltonian form of $G_2^{(1)}$ ATFT described by
\begin{eqnarray}\label{eq:G2-RHF}
\mathcal{H}^{\bbbr}_{G_2^{(1)}} &=&
{1 \over 2} \sum_{k=1}^{r} \left((p_k^+)^2- (p_k^-)^2\right) + (e^{-({\bf q}^+ ,\alpha_0)}-1) +
 3(e^{-({\bf q}^+ ,\alpha_1)}-1)  + 2 (\cos ({\bf q}^-,\alpha_2)-1),\\
\label{eq:*ome-g2}
\omega ^{\bbbr}_{G_2^{(1)}} &=&
(\delta {\bf p}^+  \wedge \delta {\bf q}^+) -(\delta {\bf p}^-  \wedge \delta {\bf q}^-) ,
\end{eqnarray}
The root space ${\Bbb E}_2$ will split into direct sums: ${\Bbb E}_2={\Bbb E}_{2,+}\oplus {\Bbb E}_{2,-}$, with
$\alpha_0, \alpha_1  \in {\Bbb E}_{2,+}$ and $\alpha_2 \in {\Bbb E}_{2,-}$.

\begin{figure}
\begin{center}
\begin{tikzpicture}

\draw[thick] (0,0.15) --(1.5,0.15);
\draw[thick] (0,0) --(1.5,0);
\draw[thick] (0,-0.15) --(1.5,-0.15);
\draw[thick] (1.5,0) --(3,0);

%arrow
\draw[thick] (0.85,0.4) --(0.6,0);
\draw[thick] (0.85,-0.4) --(0.6,0);

\draw[thick,fill=white] (0,0) circle (1.5 mm);
\draw[thick,fill=black] (1.5,0) circle (1.5 mm);
\draw[thick,fill=white] (3,0) circle (1.5 mm);
%\draw[thick,fill=black] (4.5,0) circle (1.5 mm);
%\draw[thick,fill=white] (6,0) circle (1.5 mm);

%Labels
\draw (0,-0.5) node {\small $\alpha_1$};
\draw (1.5,-0.5) node {\small $\alpha_2$};
\draw (3,-0.5) node {\small $\alpha_0$};
%\draw (4.5,-0.5) node {\small $\alpha_3$};
%\draw (6,-0.5) node {\small $\alpha_4$};

\draw (-1.4,0) node {$G_2^{(1)}$};

\end{tikzpicture}
\end{center}
  \caption{\small The extended Dynkin diagram of the  complex untwisted affine Kac-Moody algebra $G_2^{(1)}$.
  %The white roots are invariant with respect to the automorphism $S_2$ in \eqref{eq:DCox-F4}.
  }\label{fig:Dynkin-G2}
\end{figure}
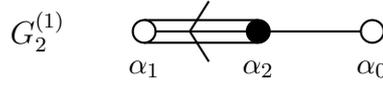

%%%%%%%%%%%%%%%%%%%%%%%%%%%%%%%

\section{On the spectral properties of the Lax operators with $\mathbb{Z}_h $-reduction}

\subsection{General theory}

The  operator $L$ (\ref{eq:2.1}) is very convenient for deriving the
2-dimensional TFT. However this formulation is not convenient  for
constructing the fundamental analytic solutions (FAS) of $L$.  The first
step we will do will be to apply on $L$ a similarity transformation
\begin{equation}\label{eq:Ltil}\begin{split}
 \tilde{L}\tilde{\psi} = g_0^{-1} L g_0 \tilde{\psi} =  i \frac{\partial \tilde{\psi}}{ \partial x } -
\left( i \sum_{k=0}^{6} q_k \mathcal{E}_k + \lambda \tilde{J}_0 \right) \tilde{\psi} =0
\end{split}\end{equation}
where $\tilde{J}_0 \in \mathfrak{h}$; i.e.
\[ \tilde{J}_0 = \sum_{ k=1}^{6}y_j H_{\alpha_j} =\sum_{ k=1}^{6}z_j H_{\omega_j}  \]
where $y_j$ and $z_j$ must be expressed in terms of $\omega = \exp( 2\pi i/h_g)$.

Let us somewhat simplify the notations in the two Lax operators

\begin{equation}\label{eq:LtL}\begin{split}
L\psi &\equiv \frac{\partial \psi}{ \partial x } + (\vec{q}_x - \lambda J_0 )\psi(x,t,\lambda) =0, \\
\tilde{L}\tilde{\psi}  &\equiv \frac{\partial \tilde{\psi} }{ \partial x } + (Q(x,t) - \lambda \tilde{J} )\tilde{\psi} (x,t,\lambda) =0,
\end{split}\end{equation}
where $J_0 = \sum_{\alpha \in \delta_0}^{} E_\alpha$, $\delta_0$ is the set of simple roots $\alpha_j$ and the minimal root
$\alpha_0$, and $ \tilde{J} \in \mathfrak{h}$ and $\omega = \exp(2\pi i/h)$.

The spectral theory of the Lax operators with $\mathbb{Z}_h$ Mikhailov reduction groups \cite{Mikh}
for the classical series of simple Lie algebras are by now well developed, see \cite{GeYa,SIAM*14,VG-Ya-13}
Here we will formulate them taking into account when necessary the peculiarities of the exceptional algebras.

In \cite{TJ22} the problem for the interrelation between $L$ and $\tilde{L}$ was solved by using the Chevallie
basis for $A_5^{(1)}$ \cite{DriSok}. Indeed, for the classical series Chevallie basis for the typical representations are well
known so it is not difficult to diagonalize $J_0$ and then evaluate $\alpha_j(\tilde{J}_0)$. The same procedure for the exceptional algebras
(except $G_2$) is rather more complicated. Indeed, the bases for the typical representations of $F_4$, $E_6$ and $E_7$ are well
known, see \cite{Howlett} where one can find all root-vectors of $F_4$ as $26\times 26$ matrices. For $E_6$
and $E_7$ the simple roots have been given in \cite{Howlett}; it is possible that all root vectors for all exceptional algebras
have been evaluated in the framework of MAGMA project. So in general one can find $J_0$ for the typical representations
of all exceptional algebras. Then the construction of $\tilde{J}_0$ becomes the next task. The eigenvalues of $J_0$
and the eigenvectors may be calculated, though the latter would be rather involved expressions.
The next challenge would be to constructing out of the properly normalized eigenvectors $g_0$ and to ensure that it
is an element of the corresponding exceptional group.

The easiest way to get a realization of the Coxeter automorphism $C_g$ is to represent it as an element
of the Cartan subgroup:
\begin{equation}\label{eq:Cox0}\begin{split}
C_g = \exp \left( \frac{2\pi i}{h_g} \sum_{j=1}^{r} H_{\omega_j} \right),
\end{split}\end{equation}
where $h_g$ is the Coxeter number and $\omega_j$ are the fundamental weights of the algebra $\mathfrak{g}$. Indeed,
using the Cartan-Weyl commutation relations \cite{Helg} one can easily check that
\begin{equation}\label{eq:Cox3}\begin{split}
 C_g J_0 C_g^{-1} = \omega J_0, \qquad \omega = \exp \left( \frac{2\pi i}{h_g} \right).
\end{split}\end{equation}
This construction is compatible with the Lax operator $L$, since $C_g $ commutes with $\vec{q}_x$ and therefore
\begin{equation}\label{eq:CoxL}\begin{split}
 C_g L (\lambda) C_g^{-1} =  L (\omega\lambda).
\end{split}\end{equation}
Treating the operator $\tilde{L}$ requires the use of the Coxeter automorphism $\tilde{C}_g$ realized as
element of the Weyl group  $\mathcal{W}_g$. We already mentioned the difficulties involved with the standard approach based on the
use of the Chevallie basis. Therefore, treating the exceptional algebras  we will use a different approach.
Our idea is to work only in the Euclidean space $\mathbb{E}_r$ in which the root systems are embedded.
Indeed, there we can construct $\tilde{C}_g$ using its dihedral form \cite{Helg, Bourb1, Cart}:
\begin{equation}\label{eq:Cox1}\begin{split}
\tilde{C}_g = S_w \circ S_b, \qquad S_w = \prod_{\alpha \in W_g}^{} S_\alpha, \qquad S_b = \prod_{\beta \in B_g}^{} S_\beta,
\end{split}\end{equation}
where $W_g$ (respectively $B_g$) are the sets of `white` roots (resp. `black` roots) of the Dinkin diagram of the
algebra $\mathfrak{g}$. Note that all elements of $W_g$ (resp. $B_g$) are all orthogonal, so $S_w$ and $S_b$ are
Weyl reflections, i.e.  $S_w^2 =\openone$ and $S_b^2 =\openone$. Then $C_g^{h_g} =\openone$, where $h_g$ is the Coxeter
number of the algebra $\mathfrak{g}$. Thus we will find the realization of the Coxeter automorphism as element of the
Weyl group $\mathcal{W}_g$ of the algebra.

The next step should be to construct the Cartan subalgebra element $\tilde{J}_0$. In fact we will replace
this step, and instead of constructing $\tilde{J}_0$ we will construct the vector $\vec{v}_g \in \mathbb{E}_r$ which is
dual to $\tilde{J}_0$. Then we will us the fact that $\alpha_j(\tilde{J}_0) = (\alpha_j, \vec{v}_g)$. This is all we need
in order to construct the continuous spectrum and the spectral data for $\tilde{L}$. On the other hand, since  $\tilde{L}$
is related to $L$ by the similarity transformation  $\tilde{L} =g_0 L g_0^{-1}$ then both operators will have the
same spectral data.

The construction of the vector $\vec{v}_g$ is based on its basic property of invariance under the $\mathbb{Z}_h$
Mikhailov reduction group; like in (\ref{eq:CoxL}) we must have:
\begin{equation}\label{eq:CoxtL}\begin{split}
\tilde{C}_g \tilde{L} (\lambda) \tilde{C}_g^{-1} =  \tilde{L} (\omega\lambda), \qquad \mbox{i.e.} \qquad
  \tilde{C}_g(\vec{v}_g) = \omega \vec{v}_g.
\end{split}\end{equation}
Such invariant vectors can be obtained by taking weighted averaged action by $\tilde{C}_g$:
\begin{equation}\label{eq:vg}\begin{split}
 \vec{v}_{0} = \sum_{s=0}^{h_g-1} \omega^{-s} \tilde{C}_g^s (\vec{v}_g).
\end{split}\end{equation}
Obviously there is an arbitrariness in the choice of $\vec{v}_g$: it can be a root, or fundamental weight  of the algebra
$\mathfrak{g}$. At the same time we can use the fact that the spectral parameter $\lambda$ can be redefined, so
this arbitrariness can be taken care of.

\subsection{The FAS of the Lax operators $L$}\label{sec:2.3}

Here we just outline the procedure of constructing the fundamental analytic solutions
(FAS) of $L$ \cite{BeCo1, BeCo2,  SIAM*14, GeYa}. First we have to determine the regions of analyticity. For
smooth potentials $Q(x) $ that fall off fast enough for $x\to \pm \infty$ these regions
are the $2h $ sectors $ \Omega _\nu  $ separated by the rays
$l_\nu  $ on which $\re \lambda (\alpha, \vec{v}_g) =0 $, where by $\alpha$ is a root of $\mathfrak{g}$.  The rays $l_\nu $ and the sectors $\Omega_\nu$ are given by:
\begin{eqnarray}\label{eq:l-nu}
l_\nu \colon \arg(\lambda) ={\pi (\nu -1) \over h_g } , \qquad \Omega_\nu \colon \frac{\pi (\nu -1)}{h_g} \leq \arg(\lambda) \leq \frac{\nu}{h_g},
\qquad \nu =1,\dots , 2h,
\end{eqnarray}
and close angles equal to $\pi/h $.

The main result is that in each of the sectors $\Omega_\nu$ we can construct fundamental analytic solutions
(FAS) $\xi_\nu (x,t,\lambda)$  of the Lax operator $\mathcal{L}$:
\begin{equation}\label{eq:calL}\begin{split}
 \mathcal{L}\xi_\nu (x,t,\lambda) \equiv i \frac{\partial \xi_\nu }{ \partial x } +Q_x \xi_\nu (x,\lambda) - \lambda [\tilde{J},\xi_\nu]=0, \\
 Q_x = \sum_{j=1}^{r} q_{j,x} \mathcal{O}_j, \qquad \mathcal{O}_j = \sum_{s=0}^{h_g-1} C_g^s(E_{\alpha_j}).
\end{split}\end{equation}

 The asymptotics of $\xi^\nu (x,\lambda ) $ and $\xi^{\nu -1} (x,\lambda )  $ along the
ray $l_\nu  $ can be written in the form \cite{67,GeYa}:
\begin{equation}\label{eq:xi-as}\begin{aligned}
\lim_{x\to -\infty } e^{-\lambda \tilde{J} x} \xi^\nu (x,\lambda e^{i0} )
e^{\lambda \tilde{J} x} &= S_\nu ^+ (\lambda ), &\quad \lambda \in l_\nu ,\\
\lim_{x\to -\infty } e^{-\lambda \tilde{J} x} \xi^{\nu -1} (x,\lambda e^{-i0} )
e^{\lambda \tilde{J} x} &= S_{\nu} ^- (\lambda ), &\quad \lambda \in l_{\nu },\\
\lim_{x\to \infty } e^{-\lambda \tilde{J} x} \xi^\nu (x,\lambda e^{i0} )
e^{\lambda \tilde{J} x} &= T_\nu ^-D_\nu ^+ (\lambda ), &\quad \lambda \in l_\nu ,\\
\lim_{x\to \infty } e^{-\lambda \tilde{J} x} \xi^{\nu -1} (x,\lambda e^{-i0} )
e^{\lambda \tilde{J} x} &= T_{\nu} ^+D_{\nu } ^-(\lambda ), &\quad \lambda \in l_{\nu } ,
\end{aligned}\end{equation}
where the matrices $S_\nu ^+ $, $T_\nu ^+ $ (resp. $S_\nu ^- $, $T_\nu ^-
$) are upper-triangular (resp. lower-triangular) with respect to the $\nu
$-ordering. They provide the Gauss decomposition
of the scattering matrix with respect to the $\nu  $-ordering, i.e.:
\begin{equation}\begin{split}\label{eq:nu-gauss}
T_\nu (\lambda ) &= T_\nu ^-(\lambda ) D_\nu ^+(\lambda ) \hat{S}_\nu ^+(\lambda ), \qquad \lambda \in l_\nu e^{i0}, \\
&=  T_\nu ^+(\lambda ) D_\nu ^-(\lambda ) \hat{S}_\nu ^- (\lambda ) ,  \qquad \lambda \in l_\nu e^{-i0}.
\end{split}\end{equation}
More careful analysis shows \cite{GeYa} that in fact $T_\nu
(\lambda ) $ belongs to a subgroup ${\cal G}_\nu$ of $SL(N,\mathbb{C})$.
In particular we can choose $\lambda$ in such a way that the simple roots of $\mathfrak{g}$
will be related to the rays $l_0$ and $l_1$; more precisely the `black` roots will be related to
$l_0$ while the `white` roots will be related to $l_1$. In particular this means that with
$l_0$ and $l_1$ we can relate subalgebras of $\mathfrak{g}$ which will be direct sums of several
copies of $sl(2)$. In addition we can introduce the analogs of the reflection coefficients $\rho^\pm_{\nu,j}$
and $\tau^\pm_{\nu,j}$ with $\nu=0,1$ as follows:
\begin{equation}\label{eq:58}\begin{aligned}
 S_0^\pm(\lambda) &=\exp \left( \sum_{\alpha \in B_g}^{}\tau_{0,\alpha}^\pm (\lambda) E_{\pm \alpha}\right), &\quad
 T_0^\mp(\lambda) &=\exp \left( \sum_{\alpha \in B_g}^{}\rho_{0,\alpha}^\pm (\lambda) E_{\mp \alpha}\right),  \\
 S_1^\pm (\lambda) &=\exp \left( \sum_{\alpha \in W_g}^{}\tau_{0,\alpha}^\pm (\lambda) E_{\pm \alpha}\right), &\quad
 T_1^\mp (\lambda)&=\exp \left( \sum_{\alpha \in W_g}^{}\rho_{0,\alpha}^\pm (\lambda) E_{\mp \alpha}\right),&\quad \\
 D_0^\pm(\lambda) &= \exp \left( \sum_{j=1}^{r} \frac{2d_{0,j}^\pm (\lambda) }{(\alpha_j,\alpha_j)}H_{\alpha_j}\right), &\quad
 D_1^\pm(\lambda) &= \exp \left( \sum_{j=1}^{r} \frac{2d_{0,j}^\pm (\lambda)}{(\alpha_j,\alpha_j)}H_{\alpha_j}\right).
\end{aligned}\end{equation}
Next the $\mathbb{Z}_h $-symmetry allows us to extend these scattering data to the other rays as well. Indeed applying the Coxeter
automorphism we get first the relations between FAS:
\begin{equation}\label{eq:Z_n-cons}\begin{aligned}
 C_g(\xi^\nu (x,\lambda)) &= \xi^{\nu +2} (x,\lambda\omega ) ,
&\quad C_g( T_\nu^\pm (\lambda  ))  &= T_{\nu +2}(\lambda\omega ), \\
C_g (S^\pm_\nu (\lambda  ))  &= S^\pm_{\nu +2}(\lambda \omega),
&\quad C_g( D^\pm_\nu (\lambda)) &= D^\pm_{\nu +2}(\lambda \omega) ,
\end{aligned}\end{equation}
where the index $\nu +2 $ should be taken modulo $2h $.
Consequently we can view as independent only the data on two of the
rays, e.g. on $l_1 $ and $ l_0 $; all the rest will be
recovered using (\ref{eq:Z_n-cons}).

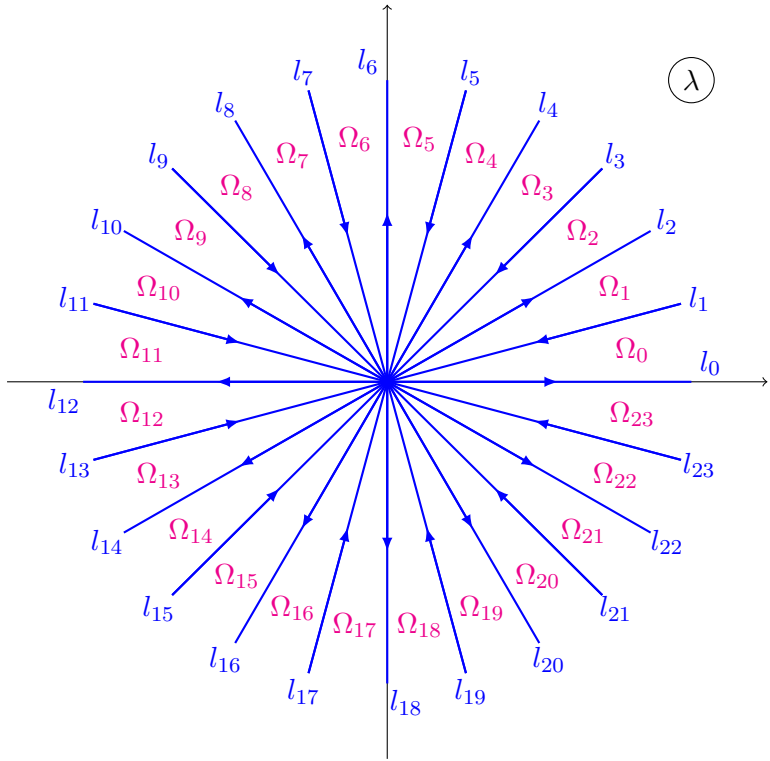
\begin{figure}
\begin{center}
\begin{tikzpicture}
%Coord axes
\draw[->] (-5,0) -- (5,0);
\draw[->] (0,-5) -- (0,5);

%Graph label
\draw (4,4) circle (3mm);
\draw (4,4) node {\small $\lambda$};

%Rays
\draw [blue,thick] (-4,0) -- (4,0);
\draw [blue,thick,latex-latex] (-2.25,0) -- (2.25,0);

\draw [blue,thick,rotate=15] (-4,0) -- (4,0);
\draw [blue,thick,-latex,rotate=15] (-4,0) -- (-2,0);
\draw [blue,thick,latex-,rotate=15] (2,0) -- (4,0);

\draw [blue,thick,rotate=30] (-4,0) -- (4,0);
\draw [blue,thick,latex-latex,rotate=30] (-2.25,0) -- (2.25,0);

\draw [blue,thick,rotate=45] (-4,0) -- (4,0);
\draw [blue,thick,-latex,rotate=45] (-4,0) -- (-2,0);
\draw [blue,thick,latex-,rotate=45] (2,0) -- (4,0);

\draw [blue,thick,rotate=60] (-4,0) -- (4,0);
\draw [blue,thick,latex-latex,rotate=60] (-2.25,0) -- (2.25,0);

\draw [blue,thick,rotate=75] (-4,0) -- (4,0);
\draw [blue,thick,-latex,rotate=75] (-4,0) -- (-2,0);
\draw [blue,thick,latex-,rotate=75] (2,0) -- (4,0);

\draw [blue,thick,rotate=90] (-4,0) -- (4,0);
\draw [blue,thick,latex-latex,rotate=90] (-2.25,0) -- (2.25,0);

\draw [blue,thick,rotate=105] (-4,0) -- (4,0);
\draw [blue,thick,-latex,rotate=105] (-4,0) -- (-2,0);
\draw [blue,thick,latex-,rotate=105] (2,0) -- (4,0);

\draw [blue,thick,rotate=120] (-4,0) -- (4,0);
\draw [blue,thick,latex-latex,rotate=120] (-2.25,0) -- (2.25,0);

\draw [blue,thick,rotate=135] (-4,0) -- (4,0);
\draw [blue,thick,-latex,rotate=135] (-4,0) -- (-2,0);
\draw [blue,thick,latex-,rotate=135] (2,0) -- (4,0);

\draw [blue,thick,rotate=150] (-4,0) -- (4,0);
\draw [blue,thick,latex-latex,rotate=150] (-2.25,0) -- (2.25,0);

\draw [blue,thick,rotate=165] (-4,0) -- (4,0);
\draw [blue,thick,-latex,rotate=165] (-4,0) -- (-2,0);
\draw [blue,thick,latex-,rotate=165] (2,0) -- (4,0);

%Labels rays

\draw[blue] (4.25,0.25) node {\small $l_0$};
\draw[blue] (-0.25,4.25) node {\small $l_6$};
\draw[blue] (-4.25,-0.25) node {\small $l_{12}$};
\draw[blue] (0.25,-4.25) node {\small $l_{18}$};

\draw[blue] (15:4.25) node {\small $l_{1}$};
\draw[blue] (30:4.25) node {\small $l_{2}$};
\draw[blue] (45:4.25) node {\small $l_{3}$};
\draw[blue] (60:4.25) node {\small $l_{4}$};
\draw[blue] (75:4.25) node {\small $l_{5}$};

\draw[blue] (105:4.25) node {\small $l_{7}$};
\draw[blue] (120:4.25) node {\small $l_{8}$};
\draw[blue] (135:4.25) node {\small $l_{9}$};
\draw[blue] (150:4.25) node {\small $l_{10}$};
\draw[blue] (165:4.25) node {\small $l_{11}$};

\draw[blue] (195:4.25) node {\small $l_{13}$};
\draw[blue] (210:4.25) node {\small $l_{14}$};
\draw[blue] (225:4.25) node {\small $l_{15}$};
\draw[blue] (240:4.25) node {\small $l_{16}$};
\draw[blue] (255:4.25) node {\small $l_{17}$};

\draw[blue] (285:4.25) node {\small $l_{19}$};
\draw[blue] (300:4.25) node {\small $l_{20}$};
\draw[blue] (315:4.25) node {\small $l_{21}$};
\draw[blue] (330:4.25) node {\small $l_{22}$};
\draw[blue] (345:4.25) node {\small $l_{23}$};

%Labels regions
\draw[magenta] (7.5:3.25) node {\small $\Omega_{0}$};
\draw[magenta] (22.5:3.25) node {\small $\Omega_{1}$};
\draw[magenta] (37.5:3.25) node {\small $\Omega_{2}$};
\draw[magenta] (52.5:3.25) node {\small $\Omega_{3}$};
\draw[magenta] (67.5:3.25) node {\small $\Omega_{4}$};
\draw[magenta] (82.5:3.25) node {\small $\Omega_{5}$};

\draw[magenta] (97.5:3.25) node {\small $\Omega_{6}$};
\draw[magenta] (112.5:3.25) node {\small $\Omega_{7}$};
\draw[magenta] (127.5:3.25) node {\small $\Omega_{8}$};
\draw[magenta] (142.5:3.25) node {\small $\Omega_{9}$};
\draw[magenta] (157.5:3.25) node {\small $\Omega_{10}$};
\draw[magenta] (172.5:3.25) node {\small $\Omega_{11}$};

\draw[magenta] (187.5:3.25) node {\small $\Omega_{12}$};
\draw[magenta] (202.5:3.25) node {\small $\Omega_{13}$};
\draw[magenta] (217.5:3.25) node {\small $\Omega_{14}$};
\draw[magenta] (232.5:3.25) node {\small $\Omega_{15}$};
\draw[magenta] (247.5:3.25) node {\small $\Omega_{16}$};
\draw[magenta] (262.5:3.25) node {\small $\Omega_{17}$};

\draw[magenta] (277.5:3.25) node {\small $\Omega_{18}$};
\draw[magenta] (292.5:3.25) node {\small $\Omega_{19}$};
\draw[magenta] (307.5:3.25) node {\small $\Omega_{20}$};
\draw[magenta] (322.5:3.25) node {\small $\Omega_{21}$};
\draw[magenta] (337.5:3.25) node {\small $\Omega_{22}$};
\draw[magenta] (352.5:3.25) node {\small $\Omega_{23}$};

\end{tikzpicture}

\end{center}
  \caption{\small The continuous spectrum of the Lax operators related to the algebras $E_6$ and $F_4$
  whose Coxeter nubers equal 12. Obviously the continuous spectra of for $E_8$, $E_7$ and $G_2$
  will have different number of rays equal to 60, 36 and 12 respectively.}\label{fig:Spectrum}
\end{figure}

\subsection{The inverse scattering problem and the Riemann-Hilbert problem}\label{sec:ism-rhp}

The next important step is the possibility to reduce the solution of the
inverse scattering problem (ISP) for the generalized Zakharov-Shabat system  to a (local) RHP. More precisely, we have:
\begin{equation}\label{eq:*-nu}\begin{aligned}
\xi^\nu (x,t,\lambda ) &= \xi^{\nu -1}(x,t,\lambda ) G_\nu (x,t,\lambda ), &\; \lambda &\in l_\nu ,\\
G_\nu (x,t,\lambda ) &= e^{\lambda \tilde{J} x -\lambda^{-1} \tilde{I} t} G_{0,\nu }(\lambda ) e^{-\lambda \tilde{J} x +\lambda^{-1} \tilde{I} t},  &\; G_{0,\nu
}(\lambda ) &= \left. \hat{S}_\nu ^- S_\nu ^+(\lambda )\right|_{t=0} .
\end{aligned}\end{equation}
where $ \tilde{I} = g_0^{-1} I(x,t) g_0$
The collection of all relations (\ref{eq:*-nu}) for $\nu =1,2,\dots,2h $
together with
\begin{equation}\label{eq:*-nu-norm}
\lim_{\lambda \to\infty } \xi^\nu (x,t,\lambda ) = \openone ,
\end{equation}
can be viewed as a local RHP posed on the collection of rays
$\Sigma \equiv \{l_\nu \}_{\nu =1}^{2h} $ with canonical normalization.
Rather straightforwardly we can  prove that if $\xi^\nu (x,\lambda ) $ is a solution of the
RHP (\ref{eq:*-nu}), (\ref{eq:*-nu-norm}) then $\chi ^\nu (x,\lambda
)=\xi^\nu (x,\lambda ) e^{-\lambda U_1 x} $ is a FAS of $L$ with potential
\begin{equation}\label{eq:q-CBC}
Q_x(x,t) = \lim_{\lambda \to\infty } \lambda \left( \tilde{J} - \xi^\nu (x,t,\lambda ) \tilde{J} \hat{\xi}^\nu (x,t,\lambda ) \right).
\end{equation}
For the classical Lie algebras the analyticity properties of  $d_j^\pm(\lambda ) $ allow one to
reconstruct  them from the sewing function $G(\lambda ) $ (\ref{eq:*-nu}) and from the and from the locations of their  zeroes and poles
\cite{GeYa, GVYa}. The idea was to consider the relations (\ref{eq:nu-gauss}) in the $j$-th fundamental representation of $\mathfrak{g}$
and take the matrix elements between the highest $|\omega_j^+ \rangle$ (resp. lowest  $|\omega_j^- \rangle$) weight vectors. This can be
done also for exceptional groups with the results
\begin{equation}\label{eq:dpm}\begin{aligned}
 d_{\nu,j}^+(\lambda) -  d_{\nu,j}^-(\lambda) & = -\ln \langle \omega_j^+ | \hat{T}_\nu^+ T_\nu^-(\lambda) | \omega_j^+\rangle \\
 &= -\ln \langle \omega_j^- | \hat{S}_\nu^- S_\nu^+(\lambda) | \omega_j^+\rangle .
\end{aligned}\end{equation}
If $\mathfrak{g}$ is one of the classical Lie algebras, then the right hand sides of (\ref{eq:dpm}) can be related to the
principal upper and lower minors of the scattering matric $T(\lambda)$ in the
typical representation.  For the exceptional algebras the fundamental representations can not so easy be related to the typical
(or lowest dimensional) one. So recovering the generating functionals of integrals of motion from their analyticity properties is an
open problem.

\section{The minimal sets of scattering data}

\subsection{The minimal sets of scattering data from the RHP}

As a consequence of the results in Section \ref{sec:ism-rhp}  we conclude that the following theorem holds true:
\begin{theorem}\label{thm:1}
Let us assume that the  Lax operator $\mathcal{L}$ (\ref{eq:calL}) is constructed along the
ideas in Section 4 above, i.e.  it is related to a simple Lie algebra $\mathfrak{g}$ and possesses
$\mathbb{Z}_h$ symmetry where $h$ is the Coxeter number of $\mathfrak{g}$.
Let us also assume that its potential $Q_x$ is such that $\mathcal{L}$ has
no discrete eigenvalues. Then its FAS $\xi^\nu (x,t,\lambda)$ in the sector $\Omega_\nu$
will provide a regular solution of the Riemann-Hilbert Problem  (\ref{eq:*-nu}).
Let us denote by $\delta_0$ and  $\delta_1$ the subsets of roots of the algebra $\mathfrak{g}$ which satisfy the equations:
\begin{equation}\label{eq:del10}\begin{aligned}
\alpha \in \delta_0 \qquad \mbox{if} \qquad \im \;\lambda (\alpha, \vec{v}_g) =0  \qquad \mbox{for} \qquad \lambda \in l_0, \\
\alpha \in \delta_1 \qquad \mbox{if} \qquad \im \; \lambda (\alpha, \vec{v}_g) =0  \qquad \mbox{for} \qquad \lambda \in l_1.
\end{aligned}\end{equation}
Let also
\begin{equation}\label{eq:}\begin{split}
\mathcal{T}_1 &= \{ \tau^\pm_{0,\alpha}(\lambda), \quad \alpha\in \delta_0, \; \lambda \in l_0\} \cup
\{ \tau^\pm_{1,\alpha}(\lambda), \quad \alpha\in \delta_1, \; \lambda \in l_1\} ; \\
\mathcal{T}_2 &= \{ \rho^\pm_{0,\alpha}(\lambda), \quad \alpha\in \delta_0, \; \lambda \in l_0\} \cup
\{ \rho^\pm_{1,\alpha}(\lambda), \quad \alpha\in \delta_1, \; \lambda \in l_1\} .
\end{split}\end{equation}
Then: A) $\mathcal{T}_1 $ (resp. $\mathcal{T}_2$) provide a minimal set of scattering data which allow one to recover:
\begin{enumerate}
  \item the Gauss factors $S_\nu^\pm (\lambda)$ for $\nu =0, 1$ (resp. $T_\nu^\pm (\lambda)$ for $\nu =0, 1$;

  \item the Gauss factors $D_\nu^\pm (\lambda)$ for $\nu =0, 1$ (resp. $T_\nu^\pm (\lambda)$ for $\nu =0, 1$;

  \item the sewing functions $G_\nu(x,t,\lambda)$ and the scattering matrices $T_\nu (\lambda)$ for each ray $l_\nu$, $\nu =0, \dots, 2h_g -1$;

\end{enumerate}

B)  $\mathcal{T}_1 $ (resp. $\mathcal{T}_2$) allows one to construct the regular solution $\xi^\nu (x,t,\lambda)$ of the RHP;

C)  $\mathcal{T}_1 $ (resp. $\mathcal{T}_2$) allows one to recover the potential $Q_x$ of the Lax operator.

\end{theorem}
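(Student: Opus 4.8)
The plan is to reduce everything to the structure already assembled in Section~4: the $\mathbb{Z}_h$-reduced Lax operator $\mathcal{L}$, its FAS $\xi^\nu$, the Gauss factorization \eqref{eq:nu-gauss} of the scattering matrix on each ray, and the local RHP \eqref{eq:*-nu}--\eqref{eq:*-nu-norm}. The key observation is that, because of the special choice of $\lambda$ described after \eqref{eq:nu-gauss}, the roots that are ``active'' on the ray $l_0$ (resp.\ $l_1$) are exactly those in $\delta_0$ (resp.\ $\delta_1$) of \eqref{eq:del10}, and the corresponding subalgebra $\mathcal{G}_\nu$ to which $T_\nu$ belongs is a direct sum of $\mathfrak{sl}(2)$'s indexed by these roots. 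Consequently all the nontrivial data on $l_0,l_1$ is carried by the scalar coefficients $\tau^\pm_{0,\alpha},\tau^\pm_{1,\alpha}$ (for $\mathcal{T}_1$) or $\rho^\pm_{0,\alpha},\rho^\pm_{1,\alpha}$ (for $\mathcal{T}_2$) appearing in \eqref{eq:58}.

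For part~A I would proceed in the following order. \textbf{(Step 1)} Given $\mathcal{T}_1$, reconstruct $S_0^\pm,S_1^\pm$ directly from \eqref{eq:58}: the exponentials are built out of the $E_{\pm\alpha}$ with $\alpha\in B_g$ (for $\nu=0$) and $\alpha\in W_g$ (for $\nu=1$), and since these root vectors generate a direct sum of $\mathfrak{sl}(2)$'s the exponentials are explicit. \textbf{(Step 2)} Recover $D_0^\pm,D_1^\pm$: since $\mathcal{L}$ has no discrete eigenvalues the functions $d^\pm_{\nu,j}(\lambda)$ are analytic and zero-free in their respective half planes, so they are determined by the RHP data through the analogue of the classical dispersion-relation argument (taking matrix elements between extreme weight vectors as in \eqref{eq:dpm}); the input needed for the sewing function $G_{0,\nu}=\hat S_\nu^- S_\nu^+$ is precisely $\mathcal{T}_1$, and the jump data fixes $D_\nu^\pm$ uniquely given the normalization. \textbf{(Step 3)} Form $T_\nu=T_\nu^- D_\nu^+\hat S_\nu^+$ on $l_0,l_1$; here one uses that on the $\mathfrak{sl}(2)$-type subalgebra the off-diagonal Gauss factor $T_\nu^\mp$ is in turn recovered from $S_\nu^\pm$ and $D_\nu^\pm$ by the usual $2\times2$ identities (this is the step that trades $\tau$'s for $\rho$'s in the $\mathcal{T}_2$ version). \textbf{(Step 4)} Propagate to all rays $l_\nu$, $\nu=0,\dots,2h_g-1$, via the Coxeter relations \eqref{eq:Z_n-cons}, which express every $S^\pm_{\nu+2},T_{\nu+2},D^\pm_{\nu+2}$ in terms of the data two rays back; starting from $l_0$ and $l_1$ this covers all $2h_g$ rays. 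The sewing functions $G_\nu$ then follow from \eqref{eq:*-nu}. The argument for $\mathcal{T}_2$ is the mirror image, starting from the $T_\nu^\mp$ in \eqref{eq:58} and using \eqref{eq:dpm} in the $\hat T_\nu^+T_\nu^-$ form.

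For parts~B and~C I would invoke the results of Section~\ref{sec:ism-rhp} essentially verbatim. Once all the jump matrices $G_\nu(x,t,\lambda)$ on $\Sigma=\{l_\nu\}_{\nu=1}^{2h}$ are known (Step~4), the hypothesis that $\mathcal{L}$ has no discrete eigenvalues makes the RHP \eqref{eq:*-nu}--\eqref{eq:*-nu-norm} regular, hence uniquely solvable, which gives $\xi^\nu(x,t,\lambda)$ — this is~B. For~C, feed the solution into the reconstruction formula \eqref{eq:q-CBC}, $Q_x=\lim_{\lambda\to\infty}\lambda(\tilde J-\xi^\nu\tilde J\hat\xi^\nu)$, to recover the potential; since the similarity transformation $\tilde L=g_0 L g_0^{-1}$ is explicit, this also gives back $\vec q_x$ and hence $\vec q$ up to a constant. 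Minimality (no smaller subset suffices) is the easy direction: the map from potentials to $\mathcal{T}_1$ (or $\mathcal{T}_2$) is the generalized Fourier transform associated to $\mathcal{L}$, and dropping any $\tau^\pm_{\nu,\alpha}$ with $\alpha\in\delta_\nu$ destroys injectivity, as one sees on the linearized (soliton-free) level where these coefficients are independent Fourier-type modes.

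The main obstacle is Step~2: for the classical algebras the functions $d^\pm_{\nu,j}(\lambda)$ are recovered from $G(\lambda)$ and their zeros/poles because the fundamental representations restrict nicely to principal minors of $T(\lambda)$ in the typical representation, but — as the paragraph after \eqref{eq:dpm} already flags — for the exceptional algebras this link is not available, so recovering the $D_\nu^\pm$ from the minimal set is genuinely delicate. I expect the argument to go through by working directly in the $j$-th fundamental representation and using \eqref{eq:dpm} together with the analyticity and zero-freeness of $d^\pm_{\nu,j}$ guaranteed by the absence of discrete eigenvalues, rather than trying to reduce to a typical representation; but this is the point where the exceptional case requires the most care, and it is the only place where the ``generic $\mathfrak{g}$'' claim in the theorem needs a uniform justification.
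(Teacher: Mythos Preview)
Your proposal is correct and follows essentially the same route as the paper's own proof: reconstruct $S_\nu^\pm$ (resp.\ $T_\nu^\pm$) on $l_0,l_1$ from \eqref{eq:58}, recover $D_\nu^\pm$ via \eqref{eq:dpm} using analyticity and the absence of discrete eigenvalues, propagate to all rays by the $\mathbb{Z}_h$-relations \eqref{eq:Z_n-cons}, then invoke the regular RHP and \eqref{eq:q-CBC} for parts~B and~C. Your extra Step~3 (the $\mathfrak{sl}(2)$ identities linking $S_\nu^\pm,D_\nu^\pm$ to $T_\nu^\mp$) and your minimality remark go a bit beyond what the paper spells out, and your flagged obstacle at Step~2 is exactly the caveat the paper itself records after \eqref{eq:dpm}.
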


\begin{proof}
A) Given $\mathcal{T}_1$ (respectively $\mathcal{T}_2$) and using (\ref{eq:Z_n-cons}) we recover the Gauss factors, i.e.,
the group-valued functions $S_\nu^\pm(\lambda,t)$ (respectively $T_\nu^\pm(\lambda,t)$)  for $\nu=0,1$.
Next, using the $\mathbb{Z}_h$-symmetry we construct $S_\nu^\pm(\lambda,t)$ (respectively $T_\nu^\pm(\lambda,t)$)  for $\nu=2,
\dots , 2h_g-1$, see eq.   (\ref{eq:Z_n-cons}).

The next step consists in recovering the functionals of the integrals of motion $d_{\nu, j}^\pm (\lambda)$. The relations
(\ref{eq:dpm}) provide in principle this possibility. Indeed, the sets $\mathcal{T}_1$ and $\mathcal{T}_1$ then
we know the Gauss factors $S_\nu^\pm(\lambda,t)$ and  $T_\nu^\pm(\lambda,t)$ not only in the typical representations,
but also in all fundamental representations. Of course there are serious difficulties in evaluating the right hand sides
of (\ref{eq:dpm}). However there is no doubt that  they are determined uniquely by any of the sets $\mathcal{T}_1$
and $\mathcal{T}_1$. As a result we have determined all Gauss factors in eq. (\ref{eq:58}), i.e. we obtained all sewing
functions $G_\nu(x,t,\lambda)$ and the scattering matrices $T_\nu (t,\lambda)$.

B) Since the operator $\mathcal{L}$ has no discrete eigenvalues, then the corresponding FAS satisfy a regular RHP.
Therefore the sewing functions $G_\nu(x,t,\lambda)$ determine uniquely the solution $\xi_\nu^\pm(\lambda,t)$ of the
RHP.

C) Since the solution $\xi_\nu^\pm(\lambda,t)$ of the RHP is uniquely determined then the corresponding potential of
$\mathcal{L}$ is determined from eq. (\ref{eq:q-CBC}).

\end{proof}

\subsection{Minimal sets of scattering data for the exceptional algebras}

Here we will formulate the results about the spectral data for the Lax operators related to the
exceptional algebras. The calculations were performed by Maple. The allowed us to calculate the
corresponding vectors $\vec{v}_g$ as well as the scalar products $(\vec{v}_g, \alpha_j)$ and
to establish for each of the exceptional algebras the results in Subsection 4.2.
In particular we found that the subsets of roots $\delta_0$ and $\delta_1$ are
given by the sets of `black` and `white` roots, i.e:
\begin{equation}\label{eq:del01}\begin{split}
\delta_0 \equiv B_g, \qquad \delta_1 \equiv W_g .
\end{split}\end{equation}
Below we list, besides the sets of roots $\delta_0$ and $\delta_1$, also Coxeter automorphism as a
linear operator (matrix) acting on the Euclidean space ${\Bbb E}^r$.

\subsubsection{$\mathfrak{g}\simeq E_6^{(1)}$}
%Roots
Let $\varepsilon_6 = \frac{1}{\sqrt{3}}(- e_6 - e_7 + e_8)$. Then the set of admissible roots \eqref{eq:e6roots} ca be rewritten as:
\begin{equation}
\label{eq:del0E6}
\begin{aligned}
\alpha_0 &= \frac{1}{2}(e_1 + e_2+ e_3 + e_4 + e_5 + \sqrt{3}\varepsilon_6), \quad
\alpha_1 = \frac{1}{2}(e_1 - e_2 - e_3 - e_4 - e_5 + \sqrt{3}\varepsilon_6), \\
\alpha_2 &= e_1 + e_2, \quad
\alpha_3 = e_2 - e_1, \quad
\alpha_4 = e_3 - e_2, \quad
\alpha_5 = e_4 - e_3, \quad
\alpha_6 = e_5 - e_4.
\end{aligned}
\end{equation}
%Weights
%Rays
There are two rays \eqref{eq:l-nu}, corresponding tho the Coxeter automorphism of $E_6^{(1)}$. Roots related to the rays $l_0$ and $l_1$:
\begin{equation}
\label{eq:laE6}
\begin{aligned}
 l_0 &= \arg \lambda =0, &\qquad B_g &\equiv \{ \alpha_1, \alpha_4, \alpha_6 \}; \\
 l_1 &= \arg \lambda =\frac{\pi}{12}, &\qquad W_g &\equiv \{ -\alpha_2, -\alpha_3, -\alpha_5 \};
 \end{aligned}
\end{equation}
The Coxeter number of $E_6^{(1)}$ is $h=12$ and the exponents are $1,4, 5, 7, 8, 11$.

%Coxeter
The Coxeter automorphism as a linear operator on the dual Euclidean space $\mathbb{E}_6$:
\begin{equation}
\label{eq:Ce6}
\begin{split}
C_{E_6} = \frac{1}{4}
\left(\begin{array}{ccccccc}
 -3 & -1 & -1 & -1 & -1 & \sqrt{3} \\
 -1 & 1 & -3 & 1 & 1 & -\sqrt{3} \\
 1 & -1 & -1 & -1 & 3 &  \sqrt{3} \\
1 & 3 & -1 & -1 & -1 &  \sqrt{3} \\
  1 & -1 & -1 & 3 & -1 & \sqrt{3} \\
 -\sqrt{3} & \sqrt{3} & \sqrt{3} & \sqrt{3} & \sqrt{3} & 1
 \end{array}\right),
\quad C_{E_6}^{12} = \openone_6, \quad \det C_{E_6} = 1.
\end{split}
\end{equation}
Here  $e_j$, $j=1, \dots, 5$ and $\varepsilon_6$ is the basis of $\mathbb{E}_6$.
The characteristic polynomial of $C_{E_6}$ reads:
\begin{equation}
\label{eq:Pe6}
\begin{split}
P_{E_6}= z^6+ z^5  - z^3 +z +1.
\end{split}
\end{equation}

\subsubsection{$\mathfrak{g}\simeq E_7^{(1)}$}

Introducing the notation
\[
\varepsilon_7=-{e_7-e_8\over \sqrt{2}},
\]
the set of admissible roots \eqref{eq:e7roots} can be rewritten as:
\begin{equation}\label{eq:del0}\begin{aligned}
\alpha_0 &= -\sqrt{2}\varepsilon_7 = e_7-e_8, &\quad \alpha_1 &= \frac{1}{2}(e_1 - e_2 -e_3 -e_4 - e_5 -e_6 + \sqrt{2}\varepsilon_7), \\
\alpha_2 &= e_1 + e_2, \qquad \alpha_2 = e_1 + e_2, &\quad \alpha_3 &= e_2 - e_1, \quad \alpha_4 = e_3 - e_2, \\
\alpha_5 &= e_4 - e_3, \qquad \alpha_6 = e_5 - e_4, &\quad \alpha_3 &= e_6 - e_5.
\end{aligned}\end{equation}
Roots related to the rays $l_0$ and $l_1$ are respectively:
\begin{equation}\label{eq:laE7}\begin{aligned}
 l_0 &= \arg \lambda =0, &\qquad B_g &\equiv \{ \alpha_1, \alpha_4, \alpha_6 \}; \\
 l_1 &= \arg \lambda =\frac{\pi}{18}, &\qquad W_g &\equiv \{ -\alpha_2, -\alpha_3, -\alpha_5, -\alpha_7 \};
 \end{aligned}\end{equation}
The Coxeter number for $E_7^{(1)}$ is $h=18$ and the exponents are $1, 5, 7, 9, 11, 13, 17$.
The Coxeter automorphism acts as a linear operator on the space $\mathbb{E}_7$ as:
\begin{equation}\label{eq:Ce7}\begin{split}
C_{E_7} = \frac{1}{4} \left(\begin{array}{ccccccc} -3 & -1 & 1 & 1 & 1 & 1 & \sqrt{2} \\ -1 & 1 & -1 & 3 & -1 & -1 & -\sqrt{2} \\
 -1 & -1 & -3 & -1 & -1 & -1 & -\sqrt{2} \\  -1 & 1 & -1 & -1 & -1 & 3 & -\sqrt{2} \\  -1 & 1 & -1 & -1 & 3 & -1 & -\sqrt{2} \\
-\sqrt{2} & -\sqrt{2} & -\sqrt{2} & -\sqrt{2} & -\sqrt{2} & -\sqrt{2} & 2
 \end{array}\right), \quad C_{E_7}^9 = -\openone_7, \quad \det C_{E_7} = -1.
\end{split}\end{equation}
Here   $e_j$, $j=1, \dots, 6$ and $\varepsilon_7$ forms the basis in the dual Euclidean space ${\Bbb E}_7$.
The characteristic polynomial of $C_{E_7}$ is:
\begin{equation}\label{eq:Pe7}\begin{split}
P_{E_7}= z^7 + z^8 - z^4 - z^3 +z +1
\end{split}\end{equation}

%\newpage
%\bigskip

%---------------------------E8--------------------------------------
\subsubsection{$\mathfrak{g}\simeq E_8^{(1)}$}
%Roots
The set of admissible roots is given by \eqref{eq:e8roots}.
Roots related to the rays $l_0$ and $l_1$ are:
\begin{equation}
\label{eq:laE8}
\begin{aligned}
 l_0 &= \arg \lambda =0, &\qquad B_g &\equiv \{ \alpha_1, \alpha_4, \alpha_6, \alpha_8 \}; \\
 l_1 &= \arg \lambda =\frac{\pi}{30}, &\qquad W_g &\equiv \{ -\alpha_2, -\alpha_3, -\alpha_5, -\alpha_7 \};
 \end{aligned}
\end{equation}
The Coxeter number  for $E_8^{(1)}$ is $h=30$ and the exponents are $1, 7, 11, 13, 17, 19, 23, 29$.
The action of the  Coxeter automorphism as a linear operator on the dual Euclidean space  $\mathbb{E}_7$ is given by
\begin{equation}
\label{eq:CE8}
\begin{split}
C_{E_8} = \frac{1}{4}
\left(\begin{array}{cccccccc}
-3 & -1 & 1 & 1 & 1 & 1 & 1 & -1 \\
-1 & 1 & -1 & 3 & -1 & -1 & -1 & 1 \\
-1 & -3 & -1 & -1 & -1 & -1 & -1 & 1 \\
-1 & 1 & -1 & -1 & -1 & 3 & -1 & 1 \\
-1 & 1 & 3 & -1 & -1 & -1 & -1 & 1 \\
-1 & 1 & -1 & -1 & -1 & -1 & 3 & 1 \\
-1 & 1 & -1 & -1 & 3 & -1 & -1 & 1 \\
1 & -1 & 1 & 1 & 1 & 1 & 1 & 3
 \end{array}\right),
\quad C_{E_8}^{15} = -\openone_8, \quad \det C_{E_8} = -1.
\end{split}
\end{equation}
Here  $e_j$, $j=1, \dots, 6$ and $\varepsilon_7$ form up the basis of ${\Bbb E}_7$.
The characteristic polynomial of $C_{E_8}$:
\begin{equation}
\label{eq:PE8}
\begin{split}
P_{E_8}= z^8 + z^7 - z^5 - z^4 - z^3 +z + 1.
\end{split}
\end{equation}

%---------------------------F4--------------------------------------
\subsubsection{$\mathfrak{g}\simeq F_4^{(1)}$}

%Roots
The set of admissible roots is given by \eqref{eq:f4roots}.
Roots related to the rays $l_0$ and $l_1$ are as follows:
\begin{equation}
\label{eq:laF4}
\begin{aligned}
 l_0 &= \arg \lambda =0, &\qquad B_g &\equiv \{ \alpha_1, \alpha_3 \}; \\
 l_1 &= \arg \lambda =\frac{\pi}{12}, &\qquad W_g &\equiv \{ -\alpha_2, -\alpha_4 \};
 \end{aligned}
\end{equation}
The Coxeter number of $F_4^{(1)}$ is $h=12$ and the exponents are $1, 5, 7, 11$.
The Coxeter automorphism acts as a linear operator on the dual space $\mathbb{E}_4$ via:
\begin{equation}
\label{eq:CF4}
\begin{split}
C_{F_4} = \frac{1}{2}
\left(\begin{array}{cccccccc}
1 & -1 & 1 & -1  \\
-1 & -1 & 1 & 1  \\
1 & -1 & -1 & 1 \\
1 & 1 & 1 & 1
 \end{array}\right),
\quad C_{F_4}^6 = -\openone_4, \quad \det C_{F_4} = 1.
\end{split}
\end{equation}
Here  $e_j$, $j=1, \dots, 4$ is the basis of ${\Bbb E}_4$.
The characteristic polynomial of $C_{F_4}$:
\begin{equation}
\label{eq:PF4}
\begin{split}
P_{F_4}= z^4  - z^2 + 1.
\end{split}
\end{equation}

%---------------------------G2--------------------------------------
\subsubsection{$\mathfrak{g}\simeq G_2^{(1)}$}

%Roots
This comes to be  both the simplest and the most peculiar case. The peculiarity is in the fact, that
though the rank of $G_2$ equals 2, its root system is traditionally written as a set of vectors in the
3-dimensional space $\mathbb{E}_3$ which are all orthogonal to the vector $e_1+e_2+e_3$. The Coxeter
automorphism here is the composition of the Weyl reflection $C_{G_2} = S_{\alpha_1} S_{\alpha_2}$.
One can check that $C_{G_2}$ is expressed as the following linear operator in $\mathbb{E}_3$:
\begin{equation}\label{eq:0G2}\begin{split}
C_{G_2} = \frac{1}{3} \left(\begin{array}{ccc} 2 & -1 & 2 \\  2 & 2 & -1 \\  -1 & 2 & 2  \end{array}\right),
\end{split}\end{equation}
which has the following properties:
\begin{equation}\label{eq:1G2}\begin{split}
\qquad C_{G_2}^2 = \left(\begin{array}{ccc} 0 & 0 & 1 \\ 1 & 0 & 0 \\ 0 & 1 & 0 \end{array}\right) , \qquad
C_{G_2}^4 = \left(\begin{array}{ccc} 0 & 1 & 0 \\ 0 & 0 & 1 \\ 1 & 0 & 0 \end{array}\right) , \qquad
C_{G_2}^6  = \openone_3.
\end{split}\end{equation}
The characteristic polynomial is
\begin{equation}\label{eq:2G2}\begin{split}
P_{G_2} = z^3 -2z^2 + 2z -1 = (z-1) (z^2 -z +1).
\end{split}\end{equation}
Another way to approach the problem is to project the root system of $G_2$ onto the 2-dimensional plane
orthogonal to the vector $e_1 + e_2 + e_3$. Indeed, let
\[
\varepsilon_1 = \frac{1}{\sqrt{2}}(e_2 - e_1),\qquad \varepsilon_2 = \frac{1}{\sqrt{2}}(e_3 - e_2).
\]
Then the set of admissible roots \eqref{eq:g2roots} can be rewritten as:
\begin{equation}
\label{eq:del0G2}
\begin{aligned}
\alpha_0 &= \sqrt{2}(\varepsilon_1 + 2\varepsilon_2) = -e_1 -e_2 + 2e_3, \quad
\alpha_1 = \sqrt{2}\varepsilon_1 = e_2 - e_1, \\
\alpha_2 &= \sqrt{2}(\varepsilon_2 - \varepsilon_1) = e_1 - 2e_2 + e_3,
\end{aligned}
\end{equation}
Roots related to the rays $l_0$ and $l_1$ are:
\begin{equation}
\label{eq:laG2}
\begin{aligned}
 l_0 &= \arg \lambda =0, &\qquad B_g &\equiv \{ \alpha_1\}; \\
 l_1 &= \arg \lambda =\frac{\pi}{12}, &\qquad W_g &\equiv \{ -\alpha_2 \};
 \end{aligned}
\end{equation}
The Coxeter number for $G_2^{(1)}$ is $h=6$ and the exponents are $1, 5$.
This Coxeter automorphism induces an action on the dual space $\mathbb{E}_2$ with  a matrix:
\begin{equation}
\label{eq:CG2}
\begin{split}
C'_{G_2} =
\left(\begin{array}{cccccccc}
0 & 1  \\
-1 & 1
 \end{array}\right),
\quad C'_{G_2}{}^3 = -\openone_2, \quad \det C'_{G_2} = 1.
\end{split}
\end{equation}
Here  $\{\varepsilon_1, \varepsilon_2\}$ form up a basis in ${\Bbb E}_2$. Finally, the characteristic polynomial of $C_{G_2}$ reads:
\begin{equation}
\label{eq:PG2}
\begin{split}
P_{G_2}= z^2  - z + 1.
\end{split}
\end{equation}

\section{Conclusions}

We presented here real Hamiltonian forms of affine Toda field theories related  exceptional untwisted complex Kac-Moody algebras. We
established that the special properties of these models allow us to relate
the construction of the RHF to the study of ${\Bbb Z}_h $ symmetries of the
extended Dynkin diagrams (with $h$ being the Coxeter number of ${\frak g}$). The general construction is illustrated by
several examples of such models related to the exceptional Kac-Moody algebras $E_6^{(1)}$, $E_7^{(1)}$, $E_8^{(1)}$,$D_4^{(1)}$ and $G_2^{(1)}$. We used the  ``dihedral realisation'' of the Coxeter automorphism ${\rm Cox}\, ({\frak g})=S_1\circ S_2$, where $S_1, S_2$ are ${\Bbb Z}_2$-automorphisms of ${\frak g}$. One of these automorphisms extracts the RHF and the other one acts as additional ${\Bbb Z_2}$-reduction on it.

Each of the real Hamiltonian forms obtained above has its dual one.
Indeed, we could define the involution $\mathcal{C}$ using the Weyl reflection $S_2$. The consideration is quite analogous:
simply we interchange the black roots with the white ones. The reformulation of all the above results is rather
obvious. Note that the dual real Hamiltonian forms of 2DTFT {\it are not} equivalent to the ones obtained above.
Indeed, it is easy to check that in all cases treated above the number of the white roots is different from
the number of black roots.

Some additional problems are natural extensions to the results presented
here:

\begin{itemize}
  \item The complete classification of all nonequivalent RHF of ATFT.

\item The description of the hierarchy of Hamiltonian structures of RHF of ATFT (for a review of the infinite-dimensional cases see e.g.
\cite{DriSok,67} and the references therein) It is also an
open problem to construct the RHF for ATFT using some of its
higher Hamiltonian structures.

\item The extension of the dressing Zakharov-Shabat method \cite{ZaSha2}  to the above classes of Lax operators is also an open problem. One of the difficulties is due to the fact that the $\mathbb{Z}_h$ reductions requires dressing factors with $2h$ pole singularities \cite{VSG-88}. This  makes the relevant linear algebraic equations rather involved \cite{Zhu}.

\item Another open problem is to study types of boundary conditions and boundary effects of ATFT's and their RHF \cite{Caudrelier,Doikou1}. This includes types of boundary defects \cite{Doikou2}.

\item The last and technically more involved
problem is to solve the inverse scattering problem for the Lax operator
and thus prove the complete integrability of all these models. The ideas of \cite{AKNS, GVYa} about the interpretation of the inverse scattering method as a generalized Fourier
transform holds true also for the $\mathbb{Z}_h$ reduces Lax operators \cite{SIAM*14, GeYa, Yan2013a, VG-Ya-13,YanVi*2012b}. This may allow one to derive
the action-angle variables for these classes of NLEE.

\end{itemize}

\section*{Acknowledgements}

This work is  supported by the Bulgarian National Science Fund, grant KP-06N42-2.

\end{document}